\newcommand{\N}{\mathbb N}
\newcommand{\C}{\mathbb C}
\newcommand{\U}{\mathbb U}
\newcommand{\tTr}{\text{Tr}\,}        
\newcommand{\tDim}{\text{dim}\,}      
\newcommand{\tL}{\text{L}\,}          
\newcommand{\tD}{\text{D}\,}          
\newcommand{\cH}{\mathcal H}
\newcommand{\cC}{\mathcal C}
\newcommand{\cD}{\mathcal D}
\newcommand{\cB}{\mathcal B}
\newcommand{\cA}{\mathcal A}
\newcommand{\cW}{\mathcal W}
\newcommand{\cU}{\mathcal{U}}
\newcommand{\sumi}{\displaystyle \sum_{i=0}^{\infty}} 
\newcommand{\variance}{\text{Var}}
\newcommand{\dt}{\frac{\partial}{\partial\theta}}
\newcommand{\wg}{\text{Wg}}
\newcommand{\avg}{\mathbb{E}}
\newtheorem{Thm}{Theorem}[section]
\newtheorem{Lem}[Thm]{Lemma}
\newtheorem{Prop}[Thm]{Proposition}
\newtheorem{Rmk}[Thm]{Remark}
\newtheorem{Ex}[Thm]{Example}
\newtheorem{Def}[Thm]{Definition}
\newtheorem{Term}[Thm]{Terminology}
\newtheorem{Con}[Thm]{Convention}
\title[Variational Entanglement Detection]{A Variational Quantum Algorithm For Approximating Convex Roofs}
\author{George Androulakis and Ryan McGaha}
\address{Department of Mathematics, University of South Carolina, 1523 Greene St, Columbia, SC 29208, USA}
\email{giorgis@math.sc.edu}
\email{rmcgaha@email.sc.edu}
\thanks{The work presented here is part of the Ph.D. thesis of the second author which is conducted under 
the supervision of the first author for the applied mathematics doctoral program at the University of 
South Carolina. As such, results herein may reappear in the PhD dissertation of the second author.}
\keywords{Entangelement measure, barren plateau, variational quantum algorithm}
\subjclass[2010]{Primary 81P40, 81P68; Secondary 68T07, 46N10}
\begin{document}
\maketitle

\begin{abstract}
    Many entanglement measures are first defined for pure states of a bipartite Hilbert space, and then extended to mixed states via the convex roof extension. 
    In this article we alter the convex roof extension of an entanglement measure, to produce a sequence of extensions 
    that we call $f$-$d$ extensions, for $d \in \mathbb{N}$, where  $f:[0,1]\to [0, \infty)$ is a fixed continuous function which vanishes only at zero.  
    We prove that for any such function $f$,
    and any continuous, faithful, non-negative function, (such as an entanglement measure), $\mu$  on the set of pure states of a finite dimensional bipartite Hilbert space, the collection of 
    $f$-$d$ extensions of  $\mu$ detects entanglement, i.e. a mixed state $\rho$ on a finite dimensional bipartite Hilbert space is separable, 
    if and only if 
    there exists $d \in \mathbb{N}$ such that the $f$-$d$ extension of $\mu$ applied to $\rho$ is equal to zero.
    We introduce a quantum variational algorithm which aims to approximate the $f$-$d$ extensions of entanglement measures
    defined on pure states. However, the algorithm does have its drawbacks. We show that this algorithm  exhibits barren plateaus when 
    used to approximate the family of $f$-$d$ extensions of the Tsallis entanglement entropy for a certain function $f$ and unitary ansatz $U(\theta)$ of sufficient depth. In practice, if additional information about the state is known, then one needs to avoid using the suggested
ansatz for long depth of circuits.
\end{abstract}

\section{Introduction}
The detection and quantification of quantum entanglement is a fundamental problem in quantum information theory. A common method for such a quantification is via the use of entanglement measures and entanglement monotones \cite{HHHH, PV05, VPRK, V98}. However, entanglement measures are often difficult to compute for arbitrary density operators, as many of these measures are defined via the convex roof construction. There are some classical numerical algorithms \cite{roth} for computing the convex roof of entanglement measures, and even some quantum variational algorithms \cite{wcubed} for computing the logarithmic negativity of arbitrary densities. In this article 
we define a family of extensions of entanglement measures defined on pure states, and we introduce a quantum variational algorithm 
which aims to approximate these extensions. We prove that the family of certain extensions of the Tsallis entanglement entropy 
exhibits barrens plateaus. 

First, let us review the definitions of entanglement measures and the convex roof construction in general. 
All Hilbert spaces that we consider throughout this article are finite dimensional. Also, if $\cH$ is a Hilbert space
then $\tD(\cH)$ will denote the set of all density operators on $\cH$, (i.e.\ positive semidefinite operators, of trace equal to $1$).

\begin{Def} \label{Def:entanglement_measure}
Let $\cH$ be a finite dimensional bipartite Hilbert space, i.e.\ $\cH=\cA\otimes\cB$ for some finite dimensional Hilbert spaces $\cA$ and $\cB$. Then a function $\mu:\tD(\cH)\rightarrow [0,\infty)$ is called an entanglement measure if the following three properties hold for all $\rho\in\tD(\cH)$:

\begin{enumerate}
    \item $\mu(\rho)=0$ if and only if $\rho$ is separable, (i.e. $\mu$ is \emph{faithful}).\\
    \item If $\Lambda$ is a channel of Local Operations and Classical Communications, 
    (LOCC channel in short), $\rho \in \tD(\cH)$, and $\Lambda (\rho) = \sum_i p_i \rho_i$ is a channel  for some convex coefficients $(p_i)_i$ and some states $(\rho_i)_i)$, then
    $\sum_i p_i \mu(\rho_i) \leq \mu(\rho)$. While the structure of LOCC channels is complicated in general, a thorough overview 
    of LOCC channels can be found in \cite{LOCC}. This property is referred as \lq\lq LOCC monotonicity\rq\rq\ or 
    \lq\lq decrease on average under an LOCC channel\rq\rq).\\ 
    \item $\mu((U_1 \otimes U_2) \rho (U_1^*\otimes U_2^*))=\mu(\rho)$ where $U_1$ and $U_2$ are unitaries on 
    $\cA$ and $\cB$ respectively, (i.e.\ $\mu$ is invariant under local unitaries).
\end{enumerate}
If $\mu$ is defined only on the set of pure states of $\cH$ and satisfies the above properties $(1)$-$(3)$ for all pure states 
$\rho$ of $\cH$, then we say that $\mu$ is an entanglement measure on the set of pure states of $\cH$. If $\mu$ is defined only 
on the set of pure states of $\cH$ and satisfies the above property~$(1)$ for all pure states $\rho$ of $\cH$, then we say that 
$\mu$ is a faithful non-negative function on the set of pure states.
\end{Def}

If $\cH$ is a bipartite Hilbert space and $\mu$ is an entanglement measure on the pure states of $\cH$, then 
one extends the domain of $\mu$ to the set $\tD(\cH)$ of density operators on $\cH$, 
using the \emph{convex roof extension} which is defined by
\begin{equation} \label{eqn : convexroof}
     \mu(\rho) = \inf 
    \left\{ \sumi p_i \mu(\psi_i) : \{ \{ p_i \}_{i=1}^\infty,  \{\psi_i\}_{i=0}^\infty \} \in \cC\cD (\rho)
    \right\}
\end{equation}
where $\cC\cD (\rho)$ is the set of all convex decompositions of $\rho$ with respect to pure states, 
often called \emph{pure state ensembles}. More precisely, $\cC\cD (\rho)$ is defined by 
\begin{equation}
    \cC\cD (\rho) = \left\{ \{ \{ p_i \}_{i=1}^\infty,  \{\ket{\psi_i}\}_{i=0}^\infty \} : p_i \geq 0, 
    \sum_{i=1}^\infty p_i =1, \ket{\psi_i} \in \cH, 
    \| \ket{\psi_i}\|=1, \sumi p_i \ketbra{\psi_i} = \rho \right\}
\end{equation}

It can be shown that the function $\mu$ given by Equation~\eqref{eqn : convexroof} is the largest convex function defined on the 
set of all states of $\cH$, which is less than 
or equal to the function $\mu$ on the set of pure states of $\cH$.
While at first glance it's not clear that the infimum in Equation~\eqref{eqn : convexroof} is attained, it has indeed been proven 
\cite{shiro, us} that the infimum is attained as long as the measure $\mu$ is norm continuous on the set of unit vectors of $\cH$,
(i.e.\ on the set of pure states of $\cH$). 

\begin{Term}
The elements of $\cC\cD (\rho)$ 
at which the infimum is achieved are called optimal pure state ensembles or simply OPSEs. 
\end{Term}

Probably the most prevalent example of an entanglement measure formed in this way is the \emph{entanglement of formation}, which is first defined on pure states by 
\begin{equation}
\text{S}(\ketbra{\psi})=-\tTr\big( \tTr_\cA(\ketbra\psi) \log (\tTr_\cA \ketbra\psi) \big) =
-\tTr\big( \tTr_\cB(\ketbra\psi) \log (\tTr_\cB \ketbra\psi) \big)
\end{equation}
This definition is then extended to the set of all density operators by the convex roof extension method described in 
Equation~\eqref{eqn : convexroof}. It has been proven that the entanglement entropy is norm continuous on $\cH$ \cite{FAN}, and so the infimum, (OPSE), is always achieved in this case. 

Another example of entanglement measurement formed in the same way is the Tsallis entanglement entropy $T_2$, which is defined by 
\begin{equation} \label{Tsallis}
    T_2 (\ketbra{\psi})= 1 - \tTr ((\tTr_{\cA} (\ketbra{\psi}))^2)
\end{equation}
for all pure states $\ket\psi \in \cH = \cA \otimes \cB$. It is then extended to general mixed states $\rho$ of $\cH$ via the convex roof construction given in Equation~\eqref{eqn : convexroof}, namely
\begin{equation}
    T_2(\rho) = \inf 
    \left\{ \sumi p_i T_2(\ketbra{\psi_i}) : \{ \{ p_i \}_{i=0}^\infty,  \{\psi_i\}_{i=0}^\infty \} \in \cC\cD (\rho)
    \right\}
\end{equation}

The natural question that arises is how to find the optimal pure state ensembles that yield the infimum of the 
extension of an entanglement measure which is defined on the pure state of a bipartite Hilbert space and is extended to
the mixed states using the convex roof extension. This is an important question because 
once the OPSEs are known, then one can compute precisely the entanglement measure.

In this article we give a variational algorithm which aims at approximating the OPSE. It is well known that a main difficulty
in the convergence of variational optimization methods is the existence of \emph{Barren Plateaus} \cite{bp1,bp2,bp3,bp4}. 
Unfortunately, even for 
\lq\lq simple\rq\rq\ entanglement measures, such as the Tsallis entanglement measure, to determine whether or not a variational algorithm
has barren plateaus, is not an easy task. 
In this article we study barren plateaus of a
variant of the convex roof extension method of the Tsallis entanglement 
entropy, 
instead of 
studying barren plateaus of the Tsallis entanglement entropy itself, since the latter is significantly more technical.
Nevertheless in Section~\ref{sec:Swap_test} we indicate how one can use the ideas presented here in order to examine barren plateaus of the Tsallis entanglement entropy. Additionally, we believe that our method adapted for the Tsallis entanglement entropy will only predict barren plateaus under
 stronger assumptions on the depth of the ansatz than the depth assumptions used in this article. In practice, one will have to avoid using the ansatz
that we use this article for circuits of large depth. This can be possible
if additional information is known about the state whose entanglement is studied.
The variant of the convex roof extension method that we use here 
does not yield an entanglement measure, but it is still  allows us to detect 
entanglement. It depends on a fixed function $f:[0,1]\to [0, \infty)$ which vanishes only at zero,
and it yields a family of extensions indexed by the integers $d\in \N$. The domains of these extensions increase with $d$,
and they become equal to the set of density operators on the Hilbert space when $d$ is large enough, (more precisely, when $d$ 
strictly exceeds the dimension of the manifold of the density operators on the given bipartite Hilbert space). 
These extensions decrease with respect to $d\in \N$ at any fixed density operator, and their infimum for all $d$ is equal to zero. 
More precisely, we have the following definition.

\begin{Def} \label{Def:f-d-extension}
Let $\cH$ be a bipartite Hilbert space, $\mu$ be an entanglement measure on the set of pure states of $\cH$, and $f:[0,1]\to [0 ,\infty)$
be a function that vanishes only at $0$.
For every $d \in \N$ define the set $\tD(\cH)_d$ of density operators of $\cH$ that can be written as a convex combination of at most $d$ 
many pure states, i.e.\ 
\begin{align} \label{DHt}
\begin{split}
\tD(\cH)_d=  \bigg\{ \rho \in \tD(\cH) &: \text{ there exists a family of pure states }(\ketbra{\psi_i}{\psi_i})_{i=1}^d\\
& \text{ and }(p_i)_{i=1}^d \subseteq [0,1] 
 \text{ with } \sum_{i=1}^d p_i =1 \text{ and } \rho =\sum_{i=1}^d p_i \ketbra{\psi_i} \bigg\}.
\end{split}
\end{align}
Note that 
\begin{enumerate}
\item[(i)] $\tD(\cH)_1$ is equal to the set of pure states of $\cH$, (in this case, $p_1=1$).
\item[(ii)] $\tD(\cH)_d \subseteq \tD(\cH)_{d+1}$ for every $d \in \N$, (since the $p_i$'s are allowed to be equal to zero).
\item[(iii)] There exists $d \in \N$ such that $\tD(\cH)_d= \tD(\cH)$, (indeed, by Caratheodory's theorem in convex analysis, this happens when $d \geq \text{dim}\, (\tD(\cH))+1$).
\end{enumerate}
Define a function $\mu_{f,d} : \tD(\cH)_d \to [0,\infty)$ by 
$$
\mu_{f,d} (\rho)= \inf \bigg\{ \sum_{i=1}^d f(p_i) \mu (\ketbra{\psi_i}): \rho = \sum_{i=1}^d p_i \ketbra{\psi_i} \in \tD(\cH)_d \text{ as in Equation~\eqref{DHt}} \bigg\}.
$$
We call the function $\mu_{f,d}$  the \lq\lq $f$-$d$ extension of $\mu$\rq\rq.\ We call the sequence $(\mu_{f,d})_{d \in \N}$
the \lq\lq sequence of $f$-extensions of $\mu$\rq\rq.\ 
\end{Def}

\begin{Rmk} \label{Rmk:infimum_attained}
Let $\cH$ be a finite dimensional bipartite Hilbert space, $\mu$ be a continuous, non-negative function on the set of pure states of $\cH$,
and 
$f:[0,1]\to [0,\infty)$ be a continuous function which vanishes only at $0$. Then, for every $d \in \N$ the infimum in the 
definition of $\mu_{f,d}$ is attained, (i.e.\ it is a minimum).
\end{Rmk}

Indeed, since $\cH$ is finite dimensional, the set of pure states on $\cH$ is a compact set. Let $\mathcal{P}$ denote the set
of pure states on $\cH$. Also, the set 
$$
\mathcal{C}= \bigg\{ (p_i)_{i=1}^d : p_i \geq 0 \text{ for all }i=1,\ldots , d \text{ and } \sum_{i=1}^d p_i=1 \bigg\}
$$
is compact as well. Now the function defined on $\mathcal{P}^d \times \mathcal{C}$ by
$$
\mathcal{P}^d \times \mathcal{C} \ni (\ketbra{\psi_i}{\psi_i})_{i=1}^d \times (p_i)_{i=1}^d \mapsto 
\sum_{i=1}^d f(p_i)\mu (\ketbra{\psi_i}{\psi_i}),
$$
is a continuous function, since both $f$ and $\mu$ are assumed to be continuous. Hence it achieves its minimum on its compact domain,
which finishes the proof of Remark~\ref{Rmk:infimum_attained}.

Remark~\ref{Rmk:infimum_attained} justifies the next terminology.

\begin{Term} \label{Term:OPSE}
Let $\cH$ be a finite dimensional bipartite Hilbert space, $\mu$ be a continuous non-negative function on the set of pure states of $\cH$,
and $f:[0,1]\to [0, \infty)$ be a continuous function which vanishes only at $0$. Let  $d \in \N$, and $\rho \in \tD(\cH)_d$.
The tuple 
$((p_i)_{i=1}^d, (\ketbra{\psi_i}{\psi_i})_{i=1}^d)$ with $p_i \geq 0$ for all $i=1, \ldots , d$, $\sum_{i=1}^d p_i =1$,
$\ketbra{\psi_i}{\psi_i}$ being pure states on $\cH$, and $\mu_{f,d}(\rho) = \sum_{i=1}^d f(p_i)\mu (\ketbra{\psi_i}{\psi_i})$,
is called \lq\lq Optimal Pure State Ensemble (OPSE) for $\mu$, $f$ and $d$\rq\rq.\ 
\end{Term}

In this article we study the sequence of $f$-extensions of Tsallis 
entanglement entropy $T_2$, where $f:[0,1]\to [0,\infty)$ is defined by $f(x)=x^2$.
We will denote by $(T_{f,d})_{d \in \N}$ the sequence of $f$-extensions of Tsallis entanglement entropy $T_2$ defined on the set of 
pure states of a bipartite Hilbert space $\cH= \cA \otimes \cB$. More precisely, by combining Equation~\eqref{Tsallis} and 
Definition~\ref{Def:f-d-extension} we obtain

\begin{equation} \label{E:objective_function}
T_{f,d}(\rho) = \inf \left\{ \sum_{i=1}^d  q_i^2 T_2(\phi_i) : 
\rho= \sum_{i=1}^d p_i \ketbra{\psi_i} \in \tD(\cH)_d \text{ as in \eqref{DHt}}
\right\}.
\end{equation}

\begin{Rmk} \label{infimum_is_zero}
The reason that we bound the length of the convex combinations in Definition~\ref{Def:f-d-extension} by a finite number $d$,
is because otherwise the infimum in Equation~\eqref{E:objective_function} would be equal to zero for every $\rho$.
\end{Rmk}

Indeed, in order to verify the statement of Remark~\ref{infimum_is_zero}, notice that if 
$\rho = \sum_{i=1}^d p_i \ketbra{\psi_i}{\psi_i} \in \tD(\cH)_d$ for some 
$(p_i)_{i=1}^d \subset [0,1]$ with $\sum_{i=1}^d p_i =1$ and a sequence of pure states $(\ketbra{\psi_i}{\psi_i})_{i=1}^d$, 
then for every $n \in \N$
we can write $\rho = \sum_{i=1}^d \sum_{j=1}^n \frac{p_i}{n} \ketbra{\psi_i}$ (i.e.\ we divide each $p_i\ketbra{\psi_i}{\psi_i}$ by $n$ and we repeat $n$ times), and 
$$
\sum_{i=1}^d \sum_{j=1}^n \left(\frac{p_i}{n}\right)^2 \mu(\ketbra{\psi_i})=n \sum_{i=1}^d \frac{p_i^2}{n^2} \mu(\ketbra{\psi_i}) 
= \frac{1}{n} \sum_{i=1}^d p_i^2 \mu(\ketbra{\psi_i}),
$$
which tends to zero as $n$ tends to infinity.

Given an entanglement measure $\mu$ on the set of pure states of a finite dimensional bipartite Hilbert space,
and a continuous function $f:[0,1] \to [0, \infty)$ which vanishes only at $0$, the 
sequence of the $f$-extensions of $\mu$ can be used in order to detect entanglement in the following sense:

\begin{Def}
Let $\cH$ be a bipartite Hilbert space, and for every $d \in \N$ let $\tD(\cH)_d$ be a subset of the set $\tD(\cH)$ of density 
matrices of $\cH$, satisfying properties $(i)$, $(ii)$ and $(iii)$ of Definition~\ref{Def:f-d-extension}.
For every $d \in \N$ let a function $\mu_d: \tD(\cH)_d \to [0,\infty)$. We say that the family 
$(\mu_d)_{d \in \N}$ detects entanglement, if for every density operator $\rho$ on $\cH$,
$\rho$ is separable if and only if there exists $d \in \N$ such that $\rho \in \tD(\cH)_d$ and $\mu_d (\rho)=0$. 
\end{Def}

\begin{Prop}\label{prop:faithful}
Let $\mu$ be any continuous, faithful, non-negative function on the set of pure states of a finite dimensional bipartite Hilbert space
$\cH= \cA \otimes \cB$. Let $f:[0,1]\to [0,\infty)$ be  a continuous function that vanishes only at zero.
Then, the $f$-$d$ extensions $(\mu_{f,d})_{d \in \N}$ of $\mu$ to the density operators of $\cH$, detect entanglement, as in the previous definition.
\end{Prop}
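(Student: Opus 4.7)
The plan is to verify each direction of the equivalence separately, making use of Caratheodory's theorem, the faithfulness of $\mu$ on pure states, and the fact that the infimum in the definition of $\mu_{f,d}$ is attained (Remark~\ref{Rmk:infimum_attained}).

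For the forward direction, assume $\rho$ is separable. Then by definition $\rho$ admits a convex decomposition $\rho = \sum_i p_i \ketbra{\psi_i}$ into pure product states. By Caratheodory's theorem in convex analysis, as invoked in item~(iii) of Definition~\ref{Def:f-d-extension}, one may choose such a decomposition of length at most $d_0 := \text{dim}(\tD(\cH)) + 1$, placing $\rho$ in $\tD(\cH)_{d_0}$. Since $\mu$ is faithful on pure states, it vanishes on every pure product state, so this particular decomposition contributes $\sum_{i=1}^{d_0} f(p_i)\mu(\ketbra{\psi_i}) = 0$. Because $\mu_{f,d_0}$ is defined as an infimum of a non-negative functional, this forces $\mu_{f,d_0}(\rho) = 0$.

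For the backward direction, suppose $\rho \in \tD(\cH)_d$ satisfies $\mu_{f,d}(\rho) = 0$ for some $d \in \N$. By Remark~\ref{Rmk:infimum_attained} the infimum is attained, yielding an OPSE $((p_i)_{i=1}^d, (\ketbra{\psi_i})_{i=1}^d)$ with $\sum_{i=1}^d f(p_i)\mu(\ketbra{\psi_i}) = 0$. Each summand is non-negative, so each must vanish: for every index $i$, either $f(p_i) = 0$ or $\mu(\ketbra{\psi_i}) = 0$. By the hypothesis that $f$ vanishes only at $0$, the former is equivalent to $p_i = 0$; by the faithfulness of $\mu$ on pure states, the latter forces $\ket{\psi_i}$ to be a product state. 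Restricting the sum to the indices with $p_i > 0$, whose vectors $\ket{\psi_i}$ are then necessarily product states, one obtains $\rho = \sum_{i:\, p_i > 0} p_i \ketbra{\psi_i}$ as a convex combination of pure product states, so $\rho$ is separable.

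There is no substantial obstacle here: the proposition follows almost immediately from the hypotheses on $f$ and $\mu$ together with the attainment of the infimum established in Remark~\ref{Rmk:infimum_attained}. The only external input needed is Caratheodory's theorem, used in the forward direction to guarantee that the required $d$ may be chosen finite and in fact depending only on $\text{dim}(\tD(\cH))$. The chief conceptual point is that because $f$ vanishes only at zero, a vanishing weighted sum $\sum_i f(p_i)\mu(\ketbra{\psi_i})$ cannot \emph{hide} entanglement in small weights: any $i$ that actually appears with positive probability must carry $\mu(\ketbra{\psi_i}) = 0$, and hence a product state.
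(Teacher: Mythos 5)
Your proof is correct and follows essentially the same route as the paper's: attainment of the infimum (Remark~\ref{Rmk:infimum_attained}) plus faithfulness of $\mu$ for the backward direction, and exhibiting a zero-valued product-state decomposition for the forward direction. You are in fact slightly more careful than the paper on two points --- explicitly invoking Caratheodory to bound the length of the separable decomposition by $\text{dim}(\tD(\cH))+1$, and handling the indices with $p_i=0$ rather than discarding them \lq\lq without loss of generality\rq\rq\ --- but these are refinements of the same argument, not a different one.
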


\begin{proof}
 $(\Leftarrow )$ Let $\rho$ be a mixed state on $\cH$ such that $\mu_{f,d}(\rho)=0$ for some $d \in \N$.  Then  by Remark~\ref{Rmk:infimum_attained}, there exists some pure state ensemble $\big\{(p_i)_{i=1}^d,(\ketbra{\psi_i})_{i=1}^d\big\}$ with $\sum_{i=1}^dp_i\ketbra{\psi_i}=\rho$ and 
\begin{equation}
    \mu_{f,d}(\rho) = \sum_{i=1}^df(p_i)\mu(\ketbra{\psi_i}{\psi_i}) =0.
\end{equation}
If we assume without loss of generality that each $p_i>0$, then it must follow that 
$\mu(\ketbra{\psi_i}{\psi_i})=0$ for each $i\in\{1,\dots,d\}$. Since $\mu$ is a faithful, non-negative function on the set of pure states,
it must be true that $\ketbra{\psi_i}{\psi_i}$ is factorable for each $i\in\{1,\dots,d\}$. Thus $\rho$ can be written as a convex combination of factorable pure states, and is therefore separable.

$(\Rightarrow )$ Conversely, suppose that $\mu$ is separable. Then there exists some ensemble $\big\{(p_i)_{i=1}^d$, $(\ketbra{\psi_i}{\psi_i})_{i=1}^d\big\}$ of $\rho$ such that each $\ketbra{\psi_i}{\psi_i}$ is factorable. 
Therefore $\mu(\ketbra{\psi_i}{\psi_i})=0$ for each $i\in\{1,\dots,d\}$ and so 
\begin{equation}
    \mu_{f,d}(\rho) = \sum_{i=1}^df(p_i)\mu(\ketbra{\psi_i}{\psi_i}) =0.
\end{equation}
\end{proof}

\section{The Algorithm} \label{sec: algorithm}

It is difficult to compute optimal pure state ensebles in Equation~\eqref{eqn : convexroof} or in Terminology~\ref{Term:OPSE}, because the proof
for their existence, (given in Remark~\ref{Rmk:infimum_attained}), is not constructive. However, methods for approximating them on a quantum computer are heavily implied by the following theorem \cite{Schrod,HJW,NH,Merm,NG}. 

\begin{Thm}[Purification Theorem]
Let $\cH$ be a finite dimensional Hilbert space and suppose that $\rho\in\tD(\cH)$ can be expressed as both of the following pure state ensembles:
$$
\sum_{i=1}^d p_i \ketbra{\psi_i}{\psi_i}= \rho = \sum_{i=1}^d q_i\ketbra{\varphi_i}{\varphi_i}
$$
and define
\begin{equation}\label{eqn : psi}
\ket\psi = \sum_{i=1}^d \sqrt{p_i}\ket{\psi_i}\otimes \ket{i}
\end{equation}
where $\{\ket{i}\}_{i=1}^d$ is an orthonormal basis for $\cW:=\C^d$. Then there exists a unitary $U\in\U(d)$ such that $(I_\cH\otimes U)\ket\psi = \sum_{i=1}^d \sqrt{q_i}\ket{\varphi_i}\otimes\ket{i}$.
\end{Thm}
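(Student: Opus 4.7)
The plan is to recognize that the statement is an instance of the standard Gisin--Hughston--Jozsa--Wootters purification theorem: any two purifications of the same mixed state onto the same ancilla space $\cW$ are related by a unitary acting only on $\cW$. The proof reduces to a careful application of the Schmidt decomposition.

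First I would introduce the companion purification
$$
\ket\varphi = \sum_{i=1}^d \sqrt{q_i}\,\ket{\varphi_i}\otimes\ket{i} \in \cH\otimes\cW
$$
and verify by a direct computation, using the orthonormality of $\{\ket{i}\}_{i=1}^d$, that $\tTr_\cW\ketbra{\psi}{\psi} = \rho = \tTr_\cW\ketbra{\varphi}{\varphi}$. Hence both $\ket\psi$ and $\ket\varphi$ are purifications of the same $\rho$, and it is enough to show that any two purifications of $\rho$ on $\cH\otimes\cW$ are related by $I_\cH\otimes U$ for some $U\in\U(d)$.

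Next I would fix a spectral decomposition $\rho=\sum_{k=1}^r \lambda_k\ketbra{e_k}{e_k}$ with $\lambda_k>0$ and $r=\tRank(\rho)\le d$, extend $\{\ket{e_k}\}_{k=1}^r$ to an orthonormal basis of $\cH$, and expand an arbitrary purification $\ket\Psi\in\cH\otimes\cW$ of $\rho$ in this basis as $\ket\Psi=\sum_k \ket{e_k}\otimes\ket{c_k}$, with $\ket{c_k}\in\cW$. Taking the partial trace, the requirement $\tTr_\cW\ketbra{\Psi}{\Psi}=\rho$ forces $\langle c_j|c_i\rangle=\lambda_i\,\delta_{ij}$, so $\ket{c_k}=0$ for $k>r$ and $\ket{c_k}=\sqrt{\lambda_k}\ket{b_k}$ for $k\le r$, where $\{\ket{b_k}\}_{k=1}^r$ is orthonormal in $\cW$. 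Applying this to $\ket\psi$ and $\ket\varphi$ yields orthonormal families $\{\ket{f_k}\}_{k=1}^r$ and $\{\ket{g_k}\}_{k=1}^r$ in $\cW$ with
$$
\ket\psi=\sum_{k=1}^r\sqrt{\lambda_k}\,\ket{e_k}\otimes\ket{f_k}, \qquad \ket\varphi=\sum_{k=1}^r\sqrt{\lambda_k}\,\ket{e_k}\otimes\ket{g_k}.
$$

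Finally I would extend $\{\ket{f_k}\}_{k=1}^r$ and $\{\ket{g_k}\}_{k=1}^r$ to orthonormal bases $\{\ket{f_k}\}_{k=1}^d$, $\{\ket{g_k}\}_{k=1}^d$ of $\cW\cong\C^d$ and define $U\in\U(d)$ by $U\ket{f_k}=\ket{g_k}$ for all $k=1,\dots,d$. By linearity, $(I_\cH\otimes U)\ket\psi=\sum_{k=1}^r\sqrt{\lambda_k}\,\ket{e_k}\otimes U\ket{f_k}=\ket\varphi$, which is what is required. The most delicate bookkeeping is when $r<d$ or when $\rho$ has degenerate eigenvalues: in the former case there is genuine freedom in the extension of the orthonormal sets on $\cW$; in the latter case the eigenbasis $\{\ket{e_k}\}$ is not uniquely determined, but neither issue obstructs the argument, since we only need the displayed expansion to hold for \emph{one} choice of eigenbasis, and any rotation within a degenerate eigenspace can be absorbed into the ancilla vectors $\ket{f_k}$, $\ket{g_k}$. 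Thus the only substantive ingredient is the Schmidt-type expansion of a purification in a fixed eigenbasis of $\rho$.
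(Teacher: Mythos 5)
Your proof is correct. Note that the paper itself does not prove this statement at all: it is the standard Schr\"odinger--Hughston--Jozsa--Wootters theorem, and the authors simply cite the literature for it. Your argument --- checking that both displayed ensembles purify $\rho$, expanding an arbitrary purification in a fixed eigenbasis of $\rho$ to force the Schmidt form $\sum_k\sqrt{\lambda_k}\,\ket{e_k}\otimes\ket{b_k}$ with $\{\ket{b_k}\}$ orthonormal in $\cW$, and then defining $U$ on $\cW$ by matching the two resulting orthonormal families and extending to a basis of $\C^d$ --- is the standard textbook proof, and it is complete; the rank bound $r\le d$ you implicitly need holds because the Schmidt rank of $\ket\psi$ cannot exceed $\dim\cW=d$.
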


This theorem tells us that any two pure state ensembles of the same length are related by a unitary map. 
So if we already know one pure state ensemble of a given density, we can find all others simply by applying unitaries 
on the ancilla space of a purification $\ket\psi$ of $\rho$. Hence we can minimize the quantity 
$\sum_{i=1}^d q_i \mu(\varphi_i)$ over the unitary group, where the $q_i$ and $\varphi_i$ are as in the previous theorem, and $\mu$ is an arbitrary entanglement measure. Moreover, the theorem hints at using the following quantum circuit to achieve our goal. 

\begin{figure}[h!]
\captionsetup{width=3.8 in}
\centering
\includegraphics[width =3.8 in]{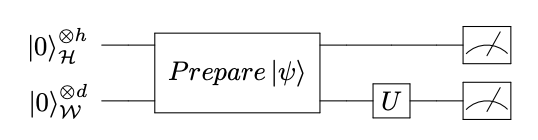} 
\caption{The quantum circuit implied by the purification theorem.}
\label{circ:1}
\end{figure}

Here $h=\tDim(\cH)$ and $\rho\in D(\cH)$, while $d = \tDim(\cW)$ where $\cW$ is the ancilla space. In particular, $d$ is also the length of the pure state decompositions output from the circuit. Usually $d$ is taken to be greater than $h$. 

Examining the circuit, notice that after preparing the system in the state $\ket\psi$ and applying the unitary $U$, the system 
is put into the superposition $\sum_{i=1}^d \sqrt{q_i}\ket{\varphi_i}\otimes\ket{i}$. When measuring the system, if the state 
$\ket{j}$ is observed on the ancilla space, then it must be the case that the top qubits are in the state $\ket{\varphi_j}$ 
with certainty. Running this circuit a number of times allows the observer to experimentally reconstruct both $\ket{\varphi_j}$ and $q_j$, and therefore compute the quantity $\sum_{i=1}^d q_i \mu(\varphi_i)$. In order to find the optimal pure state ensemble we need 
expressions for $\ket{\varphi_i}$ and $q_i$ in terms of $U$. 
By examining the circuit step by step, we see that the circuit first prepares the purification $\ket\psi$. The unitary $U$ is then 
applied to the ancilla space resulting in the state $(I_\cH\otimes U) \ketbra{\psi}{\psi} (I_\cH\otimes U^*)$. Lastly, measuring 
the ancilla yields the states
\begin{equation}
    \frac{1}{q_i} \Big( I_\cH\otimes\ketbra{i}{i}U\Big) \ketbra{\psi}{\psi} \Big( I_\cH \otimes U^*\ketbra{i}{i}\Big)
\end{equation}

where
\begin{equation}\label{eqn : qi}
\begin{split}
    q_i &= \tTr\Bigg( \Big(I_\cH\otimes\ketbra{i}{i}U\Big)\ketbra{\psi}{\psi} \Big(I_\cH\otimes U^*\ketbra{i}{i}\Big)\Bigg)\\
    &= \bra\psi \Big( I_{\cH}\otimes U^*\ketbra{i}{i}U\Big)\ket\psi.\\
\end{split}
\end{equation}
And to access the state on the space $\cH$, we partial trace the ancilla and use the partial cyclicity of the partial trace to arrive at the following expression for the states $\ketbra{\varphi_i}{\varphi_i}$ 
\begin{equation} \label{eqn : ketbra varphii}
   \ketbra{\varphi_i}{\varphi_i} =  \frac{1}{q_i}\tTr_\cW\Bigg( \Big(I_\cH\otimes U^*\ketbra{i}{i}U\Big)\ketbra{\psi}{\psi} \Bigg)
\end{equation}
for each $i$ in $\{1,\dots,d\}$.

\begin{Con}
    Since all of the sums that follow range from 1 to $d$, we will often write $\sum_{index}$ to mean $\sum_{index=1}^d$. 
\end{Con}

The natural question that arises is how to search the unitary group for the unitary that minimizes the quantity $\sum_{i=1}^d q_i\mu(\varphi_i)$ where the $q_i$ and $\ket{\varphi_i}$ as in Equations \eqref{eqn : qi} and \eqref{eqn : ketbra varphii} respectively, and $\mu$ is an entanglement measure or related function. There are several ways to accomplish this task, the most common being the utilization of a parametrization of the unitary group \cite{CMB} and applying geometric optimization techniques as in \cite{roth}, or the use of random parametrized quantum circuits \cite{ibm1, Goog1, Goog2,wcubed}. We choose the latter, as this choice will greatly simplify calculations in the proof of our main result. Specifically, we will use parametrized unitaries of the following form:
\begin{equation}
\label{eqn : ansatz}
    U(\theta_1,\dots,\theta_L)=\prod_{l=1}^L exp\big(i\theta_l V_l\big)E_l,
\end{equation}
where $V_l$ is a Hermitian operator, $\theta_l\in[0,2\pi)$ for each $l\in\{1,\dots,L\}$, and $E_l$ is an entangling gate 
independent of the parameters $\theta_1,\dots,\theta_L$. Common choices for $E_l$ are a  $CX$ or $CZ$ ladder. In applications, 
the operators $V_l$ are often taken to be a randomly chosen string of Pauli operators, which is the convention that we use. Indeed, it is well known that the gates $R_x(\theta)= exp(-i \theta X/2)$, $R_y(\theta)= exp(-i \theta Y/2)$, $R_z(\theta)=exp(-i \theta Z/2)$, the phase gate $Ph(\theta)=exp(i\theta)$ and $CNOT$ form a universal gate set, (see \cite[Section 2.6]{Williams}).
Since we have introduced the parameters $\theta_1,\dots,\theta_L$ into our unitaries, the ansatz we wish to train can be written 
as $\sum_{i=1}^d q_i(\theta_1,\dots,\theta_L)\mu\Big(\varphi_i(\theta_1,\dots,\theta_L)\Big)$, thus allowing us to use common 
optimization techniques to approximate the minimum. Thus a quantum algorithm to approximate optimal pure state ensembles can be 
summarized as below:

\begin{algorithm}[H]
\caption{Variational Approximation of OPSE}
\begin{algorithmic}
\State  {\bfseries Input:} An ensemble $\{(p_i)_{i=1}^d, (\psi_i)_{i=1}^d\}$ of a density matrix $\rho\in\tD(\C^h)$, 
and an optimization routine, (as in \cite{nr}).
\State {\bfseries Parameters:} Initial parameters $\theta_1^0,\dots,\theta_L^0$.\\
\While{Optimization routine has not converged}{
    \State Prepare the purification $\ket\psi = \sum_{i=1}^d \sqrt{p_i}\ket{\psi_i}\otimes\ket{i}$ on $\cH\otimes\cW$.
    \State Construct $U(\theta_1,\dots,\theta_L)$ as above and apply $U(\theta_1,\dots,\theta_L)$ to the ancilla, $\cW$.
    \State Measure.
    \State Use measurements to compute $\sum_{i=1}^d q_i(\theta_1,\dots,\theta_L)\mu\Big(\varphi_i(\theta_1,\dots,\theta_L)\Big)$ (or related quantities such as gradients as required by the chosen optimization technique). 
    \State Update $\theta_1,\dots,\theta_L$ according to optimization routine.
}
\State {\bfseries return} $\bigg\{\Big(q_i(\theta_1,\dots,\theta_L)\Big)_{i=1}^d,\Big(\varphi_i(\theta_1,\dots,\theta_L)\Big)_{i=1}^d\bigg\}$.

\end{algorithmic}
\end{algorithm}


An algorithm of this type is called a \emph{variational quantum algorithm}(VQA) \cite{vqa1}. VQA has shown great success in various settings such as Hamiltonian Simulation and Quantum Machine Learning \cite{ibm1,vqa1, vqa2, vqa3} where most cost functions are physically implementable on a quantum computer. In the algorithm described above, we instead train a nonlinear cost function on the measurements of a physically implementable ansatz. While VQA is believed, (not yet proven), to offer a possible quantum advantage over classical algorithms for certain problems \cite[Chapter 1 - Section C]{vqa1}, there are still drawbacks, namely the issue of \emph{Barren Plateaus} \cite{bp1,bp2,bp3,bp4}. These are regions of the training landscape on which the partial derivatives with respect to the parameters decrease exponentially as the number of qubits in the experiment increase. More precisely, we give the following definition.

Consider the ansatz in Equation~(\ref{eqn : ansatz}) on $k$ qubits where $d=2^k$, with objective function 
\begin{equation}
C(\theta_1,\dots,\theta_L)=\sum_{i=1}^d q_i(\theta_1,\dots,\theta_L)\mu\Big(\varphi_i(\theta_1,\dots,\theta_L)\Big). 
\end{equation}

Then the training landscape will contain barren plateaus in the parameter $\theta_j$  for some $j \in \{1, \ldots , d \}$, if
\begin{equation}
\variance_{\U(2^k)}\big[\partial_{\theta_j} C(\theta_1,\dots,\theta_L)\big] \sim\mathscr{O}(a^{k})
\end{equation}
where the variance is taken with respect to the Haar measure on the Unitary group, $k$ is the number of qubits, and 
$a\in(0,1)$ is a real number \cite{bp3}. When this variance is exponentially small, the ansatz can become untrainable in practice because an exponentially large amount of precision will be required on the classical hardware to evaluate the derivatives. For instance, if it were shown that the variance is on the order of $\frac{1}{2^k}$, where $k$ is the number of qubits, then systems with only 100 qubits can start creating difficulties since the number of classical bits required to differentiate the values of the gradient from 0 will be in the order of $2^{-100}$.

In this article we show that the $f$-$d$-extension 
$T_{f,d}$ of the Tsallis entanglement entropy $T_2$ for $f(x)=x^2$  exhibits 
barren plateaus when using the ansatz given in Equation~\eqref{eqn : ansatz} 
of large depth, (thus in particular, if no additional information is 
available about the state whose entanglement is examined). In practice, if additional information about the state is known, then one needs to avoid using the suggested
ansatz for long depth of circuits. Recall that 
$T_{f,d}$ is given by Equation~\eqref{E:objective_function}.
While the OPSE of $T_{f,d}$ will yield $T_{f,d}(\rho)=0$ if we allow infinitely long decompositions as considered in 
Remark~\ref{infimum_is_zero}, $T_{f,d}$ is still useful in  
determining whether a state is entangled, as long as we consider decompositions of finite length, (by Proposition~\ref{prop:faithful}).
Also, obviously, $T_{f,d}$ is local unitary invariant (just as the Tsallis
entanglement entropy $T_2$), but is not LOCC monotonic.
In order to set the ground for Theorem~\ref{Thm:main_theorem}, first
define the cost function for the algorithm, that needs to be minimized, to be equal to $T_f(\rho;\theta_1,\dots,\theta_L)$ 
which is given by

\begin{equation}
T_{f,d}(\rho;\theta_1,\dots,\theta_L) = \sum_i q_i^2(\theta_1,\dots,\theta_L)T_2(\varphi_i(\theta_1,\dots,\theta_L)).
\end{equation}

To simplify notation somewhat, we define $\Phi_i$ by
\begin{equation}\label{eqn : bigphi}
\begin{split}
\Phi_i (\theta_1, \ldots , \theta_L)&=q_i(\theta_1, \ldots , \theta_L) \ketbra{\varphi_i(\theta_1,\ldots,\theta_L)}{\varphi_i(\theta_1,\ldots,\theta_L)}\\
&= \tTr_\cW\Bigg( \Big(I_\cH\otimes U^*\ketbra{i}U\Big)\ketbra{\psi}{\psi} \Bigg).
\end{split}
\end{equation} 
For readability, we will usually suppress the parameters 
$(\theta_1, \ldots , \theta_L)$ and simply write $\Phi_i$.
Since, by Equation~\eqref{Tsallis}, $T_{f,d}$ can be equivalently rewritten 
from Equation~\eqref{E:objective_function} to 

\begin{equation} \label{E:T_f_d}
T_{f,d} (\rho)= 
\inf 
\bigg\{ 
\sum_{i=1}^d q_i^2 ( 1 - \tTr ((\tTr_{\cA} (\ketbra{\psi_i}))^2)): 
\rho= \sum_{i=1}^d q_i \ketbra{\psi_i} \in \tD(\cH)_d 
\bigg\} ,
\end{equation}
we can write $T_{f,d}$ succinctly as 
\begin{equation}
    \sum_i\Big( \big(\tTr\tTr_\cA\Phi_i\big)^2 - \tTr\big(\tTr_\cA\Phi_i^2\big)\Big) = \sum_i\Big(q_i^2-\tTr\big(\tTr_\cA\Phi_i^2\big)\Big).
\end{equation}

 Notice that the last quantity is non-negative, since 
 $\Big(\tTr (P)\Big)^2 \geq \tTr( P^2)$ for all positive operators $P$. Indeed, let $P=\sum_i \lambda_i \Pi_i$ be the spectral decomposition of $P$. Then all eigenvalues $\lambda_i$ are non-negative and the $\Pi_i$ are mutually orthogonal projections with $\tTr \Pi_i=1$ for each $i$. Thus 
\begin{equation}
    \Big(\tTr(P)\Big)^2 = \sum_{i,j}\lambda_i\lambda_j
    = \sum_i \lambda_i^2 + \sum_{i\neq j}\lambda_i\lambda_j
    \geq \sum_i \lambda_i^2
    = \tTr(P^2).
\end{equation}

We show some modest limits on the trainability of  $T_{f}(\rho;\theta_1,\dots,\theta_L)$ as summarized by the following theorem.

\begin{Thm} \label{Thm:main_theorem}
    If the  parametrized circuit in Equation~\eqref{eqn : ansatz} is of sufficient depth, then for each $j \in \{1,\dots,L\}$, $\variance[\frac{\partial}{\partial_{\theta_j}} T_f]\sim\mathscr{O}(\frac{1}{2^k})$, where the objective function is defined in Equation~\eqref{E:objective_function}, the variance is taken with respect to the Haar measure on the unitary group, and $k$ is the number of qubits. 
\end{Thm}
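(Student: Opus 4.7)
The plan is to follow the Haar-integration template of \cite{bp1,bp3}, adapted to the fact that the cost function $T_{f,d}$ is quadratic rather than linear in the intermediate blocks $\Phi_i$. First, I would split the ansatz around the $j$-th parameter as $U(\theta_1,\ldots,\theta_L)=\up\, e^{i\theta_j V_j} E_j\, \um$, so that $\partial_{\theta_j} U = i\widetilde V U$ with $\widetilde V = \up V_j \up^{*}$. A direct manipulation then gives
$$
U^{*}[\ketbra{i}{i},\widetilde V]U = \um^{*}V_\theta^{*}\,[\up^{*}\ketbra{i}{i}\up,\,V_j]\,V_\theta\um,
$$
where $V_\theta = e^{i\theta_j V_j}E_j$, which cleanly separates the $\up$- and $\um$-dependence. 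Substituting into Equations~\eqref{eqn : qi} and~\eqref{eqn : bigphi} yields
$$
\partial_{\theta_j}q_i = i\bra\psi\big(I_{\cH}\otimes U^{*}[\ketbra{i}{i},\widetilde V]U\big)\ket\psi,\qquad \partial_{\theta_j}\Phi_i = i\,\tTr_{\cW}\big((I_{\cH}\otimes U^{*}[\ketbra{i}{i},\widetilde V]U)\ketbra{\psi}{\psi}\big),
$$
and differentiating the compact form $T_{f,d}=\sum_i\big(q_i^{2}-\tTr((\tTr_{\cA}\Phi_i)^{2})\big)$ then expresses $\partial_{\theta_j}T_{f,d}$ as a finite linear combination of traces that are polynomial of total bidegree $(2,2)$ in $(U,U^{*})$.

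Next, I would show that $\avg_{\U(2^k)}[\partial_{\theta_j}T_{f,d}]=0$. The left-invariance of the Haar measure combined with the parameter-shift symmetry (conjugating by $e^{i\pi V_j/2}$ sends $V_j\mapsto -V_j$ when $V_j^{2}=I$, which is the case for any Pauli string) flips the sign of every summand while leaving the measure unchanged, forcing the mean to vanish. Consequently $\variance[\partial_{\theta_j}T_{f,d}]=\avg_{\U(2^k)}[(\partial_{\theta_j}T_{f,d})^{2}]$, so only the second Haar moment of the derivative remains to be controlled.

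Under the ``sufficient depth'' hypothesis, the blocks $\up$ and $\um$ distribute as approximate Haar-random unitaries on $\U(2^k)$. Squaring the expression above produces a sum of traces polynomial in $(\up,\up^{*})$ and $(\um,\um^{*})$ of total bidegree $(4,4)$. These can be evaluated term-by-term via the unitary $4$-design Weingarten formula
$$
\int U^{\otimes 4}\, X\,(U^{*})^{\otimes 4}\, dU \;=\; \sum_{\sigma,\tau\in S_4}\wg(\sigma\tau^{-1},2^k)\,\tTr(X P_\tau)\, P_\sigma,
$$
in which every $\wg$ value is of leading order $\mathscr{O}(2^{-4k})$. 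After carrying out the loop contractions induced by the partial traces over $\cA$ and $\cW$, each surviving diagram contributes at most $2^{3k}$ from the loops multiplied by $2^{-4k}$ from the Weingarten normalization, giving the advertised net scaling $\mathscr{O}(2^{-k})$.

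The hardest part is the combinatorial bookkeeping at fourth Haar order. Standard barren-plateau proofs invoke only the two $2$-design Weingarten values, whereas here one must track the full $24\times 24$ permutation table of $\wg$ on $S_4$ while simultaneously following how these permutations interact with the partial traces $\tTr_{\cA}$ and $\tTr_{\cW}$ and with the Schmidt structure of the purification $\ket\psi$. Showing that every subleading contribution cancels or is dominated by $1/2^{k}$ uniformly in $\ket\psi$ and in the Pauli string $V_j$ is the key technical challenge; making the phrase ``sufficient depth'' quantitative then amounts to invoking a Harrow--Low type convergence of random parametrized circuits to $4$-designs, so that the Haar integration is valid up to an additive error that does not spoil the scaling.
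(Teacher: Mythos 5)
Your proposal follows essentially the same route as the paper's proof: split the ansatz at the differentiated gate, write $\partial_{\theta_j}(U^{*}\ketbra{i}{i}U)$ as a conjugated commutator, invoke the (approximate) $t$-design property of the two halves under the sufficient-depth hypothesis, and bound the second moment term by term with the $S_4$ Weingarten calculus; your power count $2^{3k}\cdot 2^{-4k}=2^{-k}$ matches the paper's bookkeeping (a Cauchy--Schwarz bound of order $d$ per permutation pair, times $\wg\sim d^{-4}$, times the $d^{2}$ choices of the outer indices $i,j$). One step is wrong as stated: since $V_j^{2}=I$ gives $e^{i\pi V_j/2}=iV_j$, conjugation by it \emph{fixes} $V_j$ rather than sending it to $-V_j$ (you would need a unitary that anticommutes with $V_j$), so your symmetry argument that $\avg[\partial_{\theta_j}T_f]=0$ does not go through; the paper instead only shows the mean is $\mathscr{O}(d^{-1/2})$ (resp.\ $\mathscr{O}(d^{-1})$) via the 2-design Weingarten formula. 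This slip is harmless for the stated theorem, because $\variance[X]\leq \avg[X^{2}]$ and the claim is an upper bound, so controlling the second moment suffices exactly as in the paper.
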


Because of the length of the proof of the above theorem, its proof is presented in the appendix of this article. 
\begin{Rmk}
As shown in the proof of Theorem~\ref{Thm:main_theorem}, the circuit must have a relatively large depth in order for a barren plateau to occur. Thus for most practical purposes, the ansatz should still be trainable since less expressive circuits are often used in practice. It is also possible to construct more \lq\lq friendly\rq\rq ans\"atze by taking into consideration properties of the given density matrix such as translation invariance.
\end{Rmk}

\begin{figure}[h!]
  \captionsetup{width = 3.8 in}
  \includegraphics[width = 3.8 in]{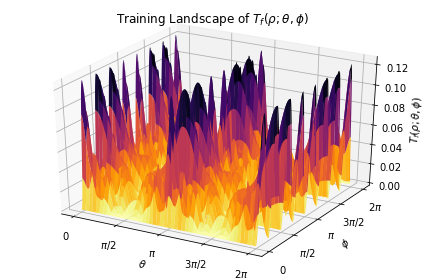}
  \caption{Training landscape of $T_f(\rho;\theta,\phi)$ where $\rho$ is the maximally mixed state and $\theta,\phi\in[0,2\pi]$.}
  \label{fig:tflandscape}
\end{figure}

\section{Numerical Simulations}
Recall that for a two qubit state $\ket\psi=\sum_{i,j=0}^1c_{i,j}\ket{ij}$, the \emph{concurrence} \cite{HHHH,conc} $\text{C}(\psi)$ is defined by
\begin{equation}
    \text{C}(\psi) = 2|c_{0,0}c_{1,1}-c_{0,1}c_{1,0}|.
\end{equation}
The concurrence is related to the von~Neumann entanglement entropy via the formula 
\begin{equation}
    S(\psi) = h\Big(\frac{1+\sqrt{1-C(\psi)^2}}{2}\Big)
\end{equation}
where $h$ is the standard binary entropy defined by $h(x)=-x\log_2(x)-(1-x)\log_2(1-x)$ \cite{HHHH}. Interestingly enough, it is possible to directly measure the concurrence of a 2 qubit quantum system if there is access to two decoupled copies of the same state \cite{conc}. One can do this using the quantum circuit in Figure~\ref{circ:2}, where the first two wires as well as  the last two wires are in the state $\ket\psi$ respectively, and $R$ is the unitary  
\begin{equation}R = \frac{1}{\sqrt{2}}
\begin{bmatrix}
    1 & 1 \\
    -1 & 1 \\

\end{bmatrix}.
\end{equation}

\begin{figure}[h!]
\centering
\captionsetup{width=3.8 in}
\includegraphics[width =3.8 in]{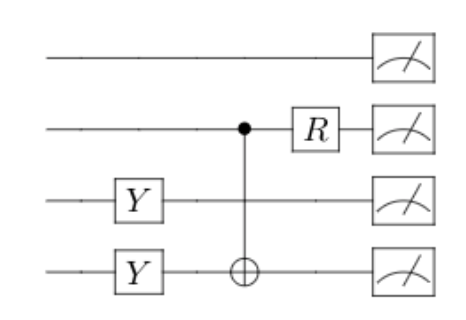}
\caption{Quantum circuit for directly measuring concurrence in 2 qubit states.}
\label{circ:2}
\end{figure}
\noindent After applying these operations to the state $\ket\psi\otimes\ket\psi$, the probability amplitude of the state $\ket{0000}$ will be $\pm\frac{C(\psi)}{2\sqrt{2}}$.

Using the above ideas, we can append the circuit for measuring the concurrence of a state to the pure state ensemble circuit, so that we can measure the concurrence of a 2 qubit ensemble of states directly, and therefore efficiently compute the entanglement of formation for 2 qubit states. The entire circuit for this process is given in Figure~\ref{fig : complicated} for an arbitrary density matrix $\rho$ with a purification $\ket\psi$ requiring two qubits. 

\begin{figure}[!ht]
\centering
\captionsetup{width=3.8 in}

\includegraphics[width = 3.8 in]{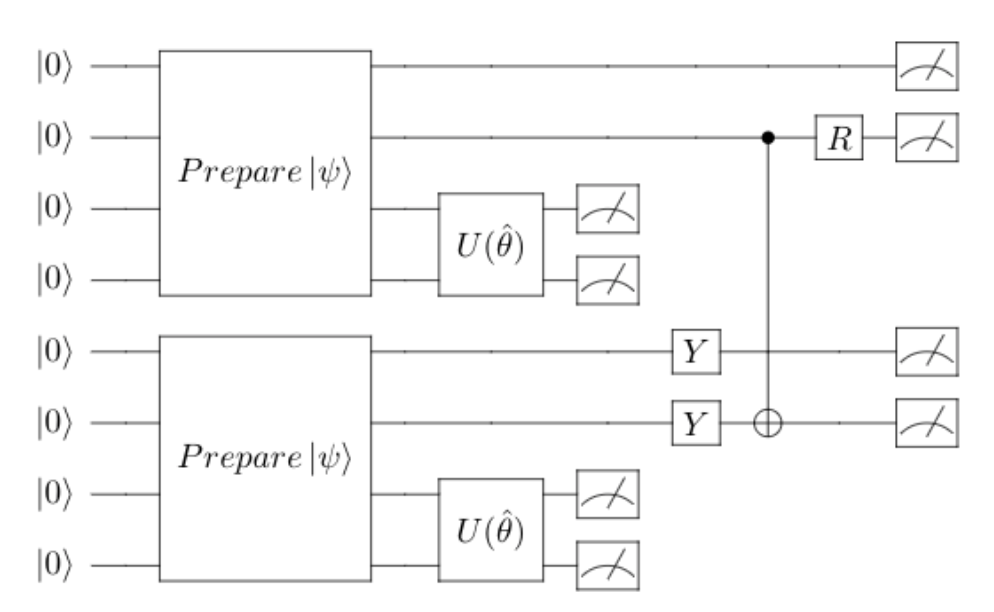}
\caption{Quantum circuit for variationally approximating the entanglement of formation by directly measuring the concurrence of pure state ensembles.}
\label{fig : complicated}
\end{figure}

Let's go through the circuit in Figure ~\ref{fig : complicated} step by step since there's a lot going on. We first prepare the purification $\ket\psi$ of a two qubit density operator on the first four and last four quantum registers so that we can access two decoupled copies of the pure state ensemble. Next we apply $U(\hat{\theta})$ to both of the ancilla spaces and measure the ancillae. This will put the top two registers into the state $\ket{\varphi_i}$, (as in Equation~\eqref{eqn : ketbra varphii}), the third and fourth registers into the state $\ket{i}$, the 5th and 6th registers into the state $\ket{\varphi_j}$, (again, as in Equation~\eqref{eqn : ketbra varphii}), and the last two registers into the state $\ket{j}$, where $i,j\in\{0,1,2,3\}$. Thus the state of circuit at this point is then $\ket{\varphi_i i \varphi_j j}$. If $i=j$, then we can measure the concurrence by appending the circuit in Figure ~\ref{circ:2} to wires 1,2,5, and 6, and taking note of the frequency of the state $\ket{00i00j}$ where the first two registers are in state $\ket{00}$, the third and fourth are in state $\ket{i}$, the fifth and sixth are in state $\ket{00}$, and lastly the 7th and 8th are in the state $\ket{j}$. We then run this experiment a desired number of times to also make note of the probabilities $q_i$ as defined as in Equation \eqref{eqn : qi} and the concurrence $C(\varphi_i)$ for each $i\in\{0,1,2,3\}$. Lastly we use the measurements to approximate the entanglement entropy in the equation below via 

\begin{equation}
    \sum_{i=0}^3 q_i h\Big(\frac{1+\sqrt{1-\text{C}^2(\varphi_i)}}{2}\Big) \approx \text{S}(\rho).
\end{equation}
Using the maximally mixed state and its canonical purification with 4 ancillae, we simulated the above variational algorithm and showcase our results in Figures~\ref{fig:straininglandscape} and \ref{fig:convergence}.
\begin{figure}
\captionsetup{width=3.8 in}
\includegraphics[width=3.8 in]{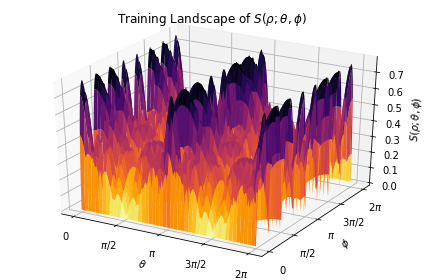}
  \captionof{figure}{The training landscape of $\text{S}(\rho;\theta,\phi)$ where $\rho$ is the maximally mixed state and $\theta,\phi\in[0,2\pi]$.}
  \label{fig:straininglandscape}
\end{figure}

\begin{figure}
  \centering
  \captionsetup{width=3.8 in}
  \includegraphics[width=3.8 in]{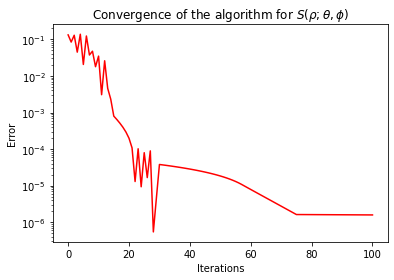}
  \captionof{figure}{Convergence of the algorithm for the entanglement of formation of the maximally mixed state.}
  \label{fig:convergence}
\end{figure}

The python code corresponding to the simulations and plots can be found at \url{https://github.com/TheMathDoctor/code_for_vqa_paper}. These simulations made extensive use of both the numpy \cite{numpy} and matplotplib \cite{matplotlib} python libraries.

\section{Approximating the Tsallis Entanglement Entropy Via The Swap Test} \label{sec:Swap_test}
In this section we discuss some additional applications of the algorithm discussed in Section 2. Specifically, we show that it is possible to measure the convex roof extended Tsallis entanglement entropy directly using the swap test \cite{swap}. 

Recall that the swap test can be used to estimate the quantity $|\bra{\psi}\ket{\phi}|^2$ with $\varepsilon$ error using the circuit in Figure ~\ref{fig : swap_test} using $\mathscr{O}\big(\frac{1}{\varepsilon^2}\big)$ quantum queries. When running this circuit a large number of times, an observer will measure $\ket{0}$ on the top wire with probability $\frac{1}{2}+\frac{1}{2}|\bra{\psi}\ket{\phi}|^2$, and an observer will measure $\ket{1}$ on the top wire with probability $\frac{1}{2}-\frac{1}{2}|\bra{\psi}\ket{\phi}|^2$.  

\begin{figure}[!ht]
\centering
\captionsetup{width=3.8 in}

\includegraphics[width = 3.8 in]{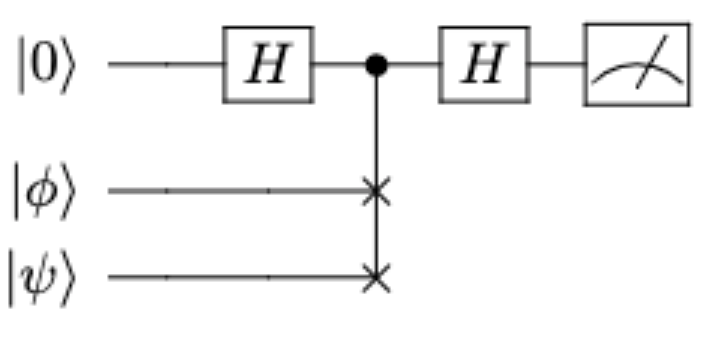}
\caption{Quantum circuit used for the swap test.}
\label{fig : swap_test}
\end{figure}

To see how this is useful for computing the Tsallis entropy of a state, consider a density operator $X=\sum_i \lambda_i \ketbra{v_i}$ where $\sum_i\lambda_i=1$ and the $\ket{v_i}$ are pure states which are not necessarily orthonormal. 
Then $\tTr(X^2)=\sum_{i,j}\lambda_i\lambda_j|\bra{v_i}\ket{v_j}|^2$. Thus, in order to compute the Tsallis entropy of $X$ which is given by $T(X) = 1 - \tTr(X^2)$, (not to be confused with the Tsallis entanglement entropy $T_2$ of a pure bipartite 
state given in Equation~\eqref{Tsallis}),
one needs to run the circuit in Figure ~\ref{fig : swap_test} with inputs $\ket{\phi}=\ket{v_i}$ and $\ket\psi = \ket{v_j}$ for each $i$ and $j$, measure the probability of measuring $\ket{1}$ for each combination, and then weight the outputs by the appropriate probabilities $\lambda_i\lambda_j$. In summary, if approximate values of $\lambda_i$'s can be computed,
then the Tsallis entropy $T(X)$  of $X$ can be approximated  via 
\begin{equation}
    T(X)= 2\sum_{i,j}\lambda_i\lambda_j \mathbb{P}\big[\text{top wire is }\ket{1}\big| \ket{\phi}=\ket{v_i},\ket{\psi}=\ket{v_j} \big].
\end{equation}

Now we need to see how circuit is useful for approximating the Tsallis entanglement entropy, let 
$\rho\in\tD\Big(\cA\otimes\cB\Big)$ and let $\ket\psi$ be a purification of $\rho$ as in Equation \eqref{eqn : psi}. Next 
let $\ket{\varphi_i}$ and $q_i$ be as in Equations \eqref{eqn : ketbra varphii} and \eqref{eqn : qi}. Then, the Tsallis 
entanglement entropy of $\rho$ can be approximated by minimizing the expression

\begin{equation}
    \sum_i q_i T\Big( \Tr_\cB\ketbra{\varphi_i}\Big) 
\end{equation}
with respect to the $q_i$ and $\ket{\varphi_i}$. However, the swap test requires two copies of the state, so we need to have two adjacent copies the pure state ensemble circuit, each of them as in Figure~\ref{circ:1}, in order to access two copies the various ensembles of $\rho$ and then append the quantum circuit for the swap 
test. Thus we need a quantum circuit very similar to the circuit in Figure~\ref{fig : complicated}. The full circuit 
for this process is given in figure~\ref{fig : tsallis}. \\

\begin{figure}[!ht]
\centering
\captionsetup{width=3.8 in}

\includegraphics[width = 3.8 in]{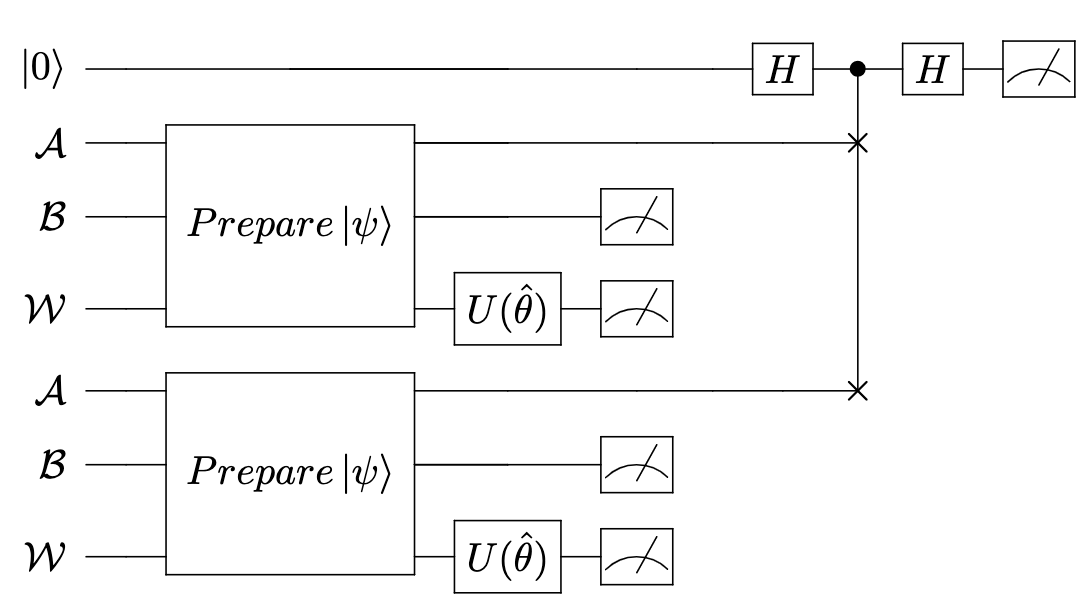}
\caption{Quantum circuit for approximating Tsallis Entanglement Entropy}
\label{fig : tsallis}
\end{figure}

Let's go through this circuit one operation at a time to understand what's going on, so that we can point out some loose 
quantum query complexity constraints on this process. Before we begin, notice that in this circuit, we specified wires for the Hilbert spaces $\cA$ and $\cB$ instead of specifying one wire for $\cH$. The reason for this is that we need to partial trace (measure and discard) either the $\cA$ or $\cB$ subsystems in order to approximate the Tsallis entanglement entropy. 

First, we need to prepare two decoupled copies of the purification $\ket\psi$ of a given density $\rho$ and apply 
$U(\hat{\theta})$ to each ancilla $\cW$. Then we measure each ancilla $\cW$. If we obtain measurements 
$i$ and $j$ on the two ancillas, then the state of the circuit will collapse into 
$\ket{0}\otimes\ket{\varphi_i}\otimes\ket{\varphi_j}$, (where $\ket{0}$ is on the top wire, and 
$\ket{\varphi_i}$ and $\ket{\varphi_j}$ are on the wires representing the two copies of $\cA \otimes \cB$).

If $i\neq j$, then all other operations will produce no useful information and we just use the observations to 
approximate $q_i$ and $q_j$. 
Otherwise, if $i=j$, then the state of the circuit will be
$\ket{0}\otimes\ket{\varphi_i}\otimes\ket{\varphi_i}$. In this case, we measure the space $\mathcal{B}$ with respect 
to a fixed basis (say the standard $Z$-basis) and collect all measurements weighted according to their probabilities 
into one density matrix on the space $\mathcal{A}$. 
The density matrix that is obtained that way is equal to 
$\tTr_\cB\ketbra{\varphi_i}=\sum_\alpha \lambda_{\alpha,i} \ketbra{\xi_{\alpha,i}}$ for some probabilities 
$\lambda_{\alpha,i}$ and some pure states $\ket{\xi_{\alpha,i}}$ on $\cA$ for each $i$.
Therefore, upon measuring and discarding the wire(s) for the space $\cB$, the first wire for $\cA$ will be in the state 
$\ket{\xi_{\alpha,i}}$, and the second wire for $\cA$ will be the state $\ket{\xi_{\alpha',i}}$ for some $\alpha$ and
$\alpha'$ each appearing with probabilities $\lambda_{\alpha,i}$ and $\lambda_{\alpha',i}$ respectively. 
Then, the swap test circuit given in Figure~\ref{fig : swap_test} is appended with inputs $\ket{\phi}= \ket{\xi_{\alpha,i}}$ and 
$\ket{\psi}=\ket{\xi_{\alpha',i}}$. As we noticed earlier, we will measure $\ket{1}$ on the top wire with probability 
$\frac{1}{2}- \frac{1}{2} |\bra{\xi_{\alpha,i}}\ket{\xi_{\alpha',i}}|^2$.
Moreover, the state  $\ket{\xi_{\alpha,i}}\otimes\ket{\xi_{\alpha',i}}$ will occur with frequency $\lambda_{\alpha,i}\lambda_{\alpha',i}N$ where $N$ is the number of shots. One can then run this experiment a desired number of times in order approximate $T(\tTr_\cB\ketbra{\varphi_i})$ which can seen to be equal to 
\begin{equation}
    T(\tTr_\cB\ketbra{\varphi_i}) = \sum_{\alpha,\alpha'}\lambda_{\alpha,i}\lambda_{\alpha',i}\big(1-|\bra{\xi_{\alpha,i}}\ket{\xi_{\alpha',i}}|^2\big)
\end{equation}
for each $i$. Thus, the Tsallis entanglement entropy of $\rho$ is approximated by 
\begin{equation}
    T_2(\rho)\approx \min_{\hat{\theta}}\sum_i q_i \sum_{\alpha,\alpha'}\lambda_{\alpha,i}\lambda_{\alpha',i}\big(1-|\bra{\xi_{\alpha,i}}\ket{\xi_{\alpha',i}}|^2\big).
\end{equation}

Next we need to think about the algorithmic complexity of this approximation. In order for the quantum circuit to yield useful information both 
ancillae must be in the same state $\ket{i}$ for some $i$. By equation \eqref{eqn : qi}, this happens with probability $q_i^2$. Thus if we consider 
the Bernoulli random variable $X$ such that $X=$\lq\lq i=j\rq\rq occurs with probability $q_i^2$ and $X=$\lq\lq i$\neq$j\rq\rq occurs with 
probability $1-q_i^2$, then the expected number shots until the first success is given by $\frac{1}{q_i^2}$. Since $\avg_\mathbb{U}[q_i]=\frac{1}{d}$ 
for each $i$, we can see that the average number of shots until the first success,
i.e.\ $\avg_\mathbb{U} [\frac{1}{q_i^2}]$, is then greater than or equal to $d^2$ by applying Jensen's inequality 
using the convex function $1/x^2$ for $x>0$. 
And since $d=2^k$ where $k$ is the number of qubits in the system, we can see that the approximation will require exponentially many shots on average in order to compute $ T(\tTr_\cB\ketbra{\varphi_i})$ within 
some error $\varepsilon$ for a fixed $\hat{\theta}$. Thus, while we are able to approximate the Tsallis entanglement entropy directly via the measurements of a quantum circuit, the process is inefficient.
While we do not prove it here, we suspect that there should be similar results for the von~Neumann entanglement entropy 
and related quantities using methods found in \cite{entcirc,newent}. We also leave as an open to problem to determine 
if this objective function suffers from barren plateaus.   

\section{Concluding Remarks}
We have shown how to access the different pure state ensembles of a given density operator using a quantum computer. We are thus able to approximate entanglement measures and related quantities that use convex roof constructions using a variational method and measurements from a parametrized quantum circuit. However we also show that the training landscape of a quantity related to the Tsallis entanglement entropy exhibits barren plateaus for a certain ansatz
and for circuits of large depth, effectively making the ansatz untrainable for large enough systems. In practice, if additional information about the state is known, then one needs to avoid using the suggested
ansatz for long depth of circuits. Lastly, we leave as an open problem the determination of barren plateaus of other entanglement measures and related quantities.

\bibliographystyle{unsrt}
\bibliography{bibliography}

\newpage
\appendix
\section{Integration Over The Unitary Group}
In order to prove Theorem~\ref{Thm:main_theorem}, we'll need the following results on integrals over the unitary group \cite{haar,bc}:
\begin{Lem}\label{eqn : wein}
    For any strings of indices $i_1,\dots,i_p,j_1,\dots,j_p$ and $i_1',\dots,i_q',j_1',\dots,j_q'$ whose values range from $1$ to $d$ 
    \begin{equation}
        \int_{\cU(d)}U_{i_1,j_1}\cdot\cdot\cdot U_{i_p,j_p}\overline{U_{i_1',j_1'}}\cdot\cdot\cdot\overline{U_{i_q',j_q'}}dU=\delta_{p,q} \sum_{\sigma,\tau\in \text{S}_p}\text{Wg}(\sigma\tau^{-1})\delta_{i_1,i_{\sigma(1)}'}\delta_{j_1,j_{\tau(1)}'}\cdot\cdot\cdot\delta_{i_p,i_{\sigma(p)}'}\delta_{j_p,j_{\tau(p)}'}
    \end{equation}
    where $dU$ is the Haar-Measure on the Unitary group and $\text{Wg}$ is the Weingarten function \cite{wg1,wg2}.
\end{Lem}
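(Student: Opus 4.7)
The plan is to establish the formula in two stages: a vanishing argument when $p\neq q$, and then a Schur-Weyl duality computation for $p=q$.

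First, I would show that the integral vanishes unless $p=q$. The Haar measure on $\cU(d)$ is invariant under the substitution $U\mapsto e^{i\theta}U$ for any $\theta\in[0,2\pi)$. Applying this substitution multiplies the integrand by $e^{i\theta(p-q)}$, since there are $p$ copies of $U$ and $q$ copies of $\overline{U}$. Averaging over $\theta\in[0,2\pi)$ then forces the integral to vanish unless $p=q$, producing the factor $\delta_{p,q}$ in the formula.

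Next, fix $p=q$ and regard the family of integrals as components of a tensor
\[
T^{(i_1,\ldots,i_p;\,j_1,\ldots,j_p)}_{(i'_1,\ldots,i'_p;\,j'_1,\ldots,j'_p)}:=\int_{\cU(d)}\prod_{k=1}^{p}U_{i_k,j_k}\overline{U_{i'_k,j'_k}}\,dU.
\]
The substitutions $U\mapsto VU$ and $U\mapsto UV$ for arbitrary $V\in\cU(d)$, combined with left- and right-invariance of the Haar measure, imply that $T$ is invariant under the action of $V^{\otimes p}\otimes\bar V^{\otimes p}$ on the $(i_k,i'_k)$-indices and separately on the $(j_k,j'_k)$-indices. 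By Schur-Weyl duality, the space of tensors invariant under the diagonal $\cU(d)$-action on $V^{\otimes p}\otimes\bar V^{\otimes p}$ is spanned by the permutation tensors $\prod_k\delta_{i_k,i'_{\sigma(k)}}$ with $\sigma\in S_p$. Applying this separately to the unprimed vs.\ primed factors gives
\[
T=\sum_{\sigma,\tau\in S_p}W(\sigma,\tau)\prod_{k=1}^{p}\delta_{i_k,i'_{\sigma(k)}}\delta_{j_k,j'_{\tau(k)}}
\]
for some coefficients $W(\sigma,\tau)$ to be determined.

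Finally, I would identify $W(\sigma,\tau)$ with $\text{Wg}(\sigma\tau^{-1})$. The dependence on only $\sigma\tau^{-1}$ follows from a relabeling argument: substituting $(i_k,j_k)\mapsto(i_{\pi(k)},j_{\pi(k)})$ for any $\pi\in S_p$ leaves the integrand invariant (being a product over $k$), while on the right-hand side this substitution sends $(\sigma,\tau)\mapsto(\sigma\pi^{-1},\tau\pi^{-1})$, preserving $\sigma\tau^{-1}$; taking $\pi=\tau$ yields $W(\sigma,\tau)=W(\sigma\tau^{-1},e)$. To compute $W$ explicitly, I would contract both sides of the displayed expansion against basis tensors indexed by $(\alpha,\beta)\in S_p\times S_p$, which produces a Gram-matrix relation whose entries are products of $d^{c(\alpha\sigma^{-1})}$ and $d^{c(\beta\tau^{-1})}$, where $c(\pi)$ denotes the number of cycles of $\pi$. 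Defining $\text{Wg}(\pi)$ as the Moore-Penrose pseudo-inverse of the class function $\pi\mapsto d^{c(\pi)}$ on the group algebra $\mathbb{C}[S_p]$ then delivers the claimed formula. The main obstacle is this last step: carrying out the pseudo-inversion carefully in the degenerate regime $d<p$, where the permutation tensors become linearly dependent, and verifying that the resulting function coincides with the standard Weingarten function.
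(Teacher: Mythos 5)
The paper offers no proof of this lemma at all: it is imported verbatim from the cited literature on Haar integration (the Collins and Collins--\'Sniady references), so there is no in-paper argument to measure your proposal against. What you have written is an accurate outline of the standard proof from those sources. The phase-averaging step $U\mapsto e^{i\theta}U$ does produce the factor $\delta_{p,q}$; left- and right-invariance of the Haar measure plus the first fundamental theorem of invariant theory for $\cU(d)$ (Schur--Weyl duality) do reduce the tensor $T$ to a combination of the permutation delta-tensors $\prod_k\delta_{i_k,i'_{\sigma(k)}}\delta_{j_k,j'_{\tau(k)}}$; and the relabeling $k\mapsto\pi(k)$ correctly forces the coefficient to depend only on $\sigma\tau^{-1}$. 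The only place where your write-up is a sketch rather than a proof is the final identification of the coefficient with $\text{Wg}$: contracting the identity against the tensors indexed by $(\alpha,\beta)$ collapses the left-hand side to $d^{c(\alpha\beta^{-1})}$ with no integral left to evaluate (because $UU^{*}=I$ along each trace cycle), and yields the relation $G\,W\,G=G$ with $G(\sigma,\tau)=d^{c(\sigma\tau^{-1})}$; one must then check that the pseudo-inverse convention you invoke is exactly the definition of the Weingarten function used in the lemma, and that the resulting expansion of $T$ is well defined even though the permutation tensors are linearly dependent when $d<p$. You correctly flag this as the remaining work, and for the purposes of this paper it is harmless: the authors only ever apply the formula with $p\le 4$ and $d=2^{k}$ large, where $G$ is invertible, $\text{Wg}=G^{-1}$, and the tabulated values in the appendix apply directly.
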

While notation can make this lemma difficult to parse at first glance, notice that the integral is 0 when the number of complex conjugate entries of the matrix is different the number of non-conjugated entries. Unfortunately, the integrals used in this paper aren't lucky enough to fall into this category and we must instead use the expression on the right hand side of the equation to evaluate our integrals. Since this expression can be difficult to understand at first, we supply a simple example of computing such an integral via this lemma. 

\begin{Ex}
    Let $U,V$ be unitary matrices on $\C^d$. Then the average gate fidelity between $U$ and $V$, $F_{avg}(U,V)$, is defined by 
    \begin{equation}
        F_{avg}(U,V) := \int_{\text{S}(\cH)} |\bra\psi U^* V\ket\psi |^2 d\psi
    \end{equation}
    where $\text{S}(\cH)$ is unit ball of $\cH$.
\end{Ex} 
Such an integral over pure states is defined by 
\begin{equation}\label{eqn : intstates}
    \int_{\text{S}(\cH)} |\bra\psi U^* V\ket\psi |^2 d\psi := \int_{\cU(d)}|\bra{\psi_0}W^* U^* VW\ket{\psi_0}|^2 dW
\end{equation}
where each $\psi$ is replaced by $W\ket{\psi_0}$ for some arbitrary state $\ket{\psi_0}\in\cH$. Now let $\{\ket{i} : 1\leq i \leq d\}$ be a basis for $\C^d$ and without loss of generality, set $\ket{\psi_0}:=\ket{1}$. Next write $H = U^*V$ and suppose that $H$ has spectral decomposition 
\begin{equation}
    H = \sum_i \sigma_i P^*\ketbra{i}P
\end{equation}
where $\sigma_i\in\C$ with $|\sigma_i|=1$ for each $1\leq i \leq d$ and $P$ is some unitary on $\cH$. Using these substitutions, the integral in Equation\eqref{eqn : intstates} transforms as 
\begin{equation}
\begin{split}
    &\int_{\cU(d)}|\bra{\psi_0}W^* U^* VW\ket{\psi_0}|^2 dW\\
    &= \int_{\cU(d)}\bra{1}W^* HW\ket{1}\overline{\bra{1}W^* HW\ket{1}} dW\\
    &=\sum_{i,j}\sigma_i\overline{\sigma}_j\int_{\cU(d)}\bra{1}W^* P^*\ketbra{i}PW\ket{1}\overline{\bra{1}W^* P^*\ketbra{i} PW\ket{1}} dW.\\
\end{split}
\end{equation}
Now using the translation invariance of the Haar measure on $\cU(d)$, we use a change of variable to simplify the above integral. Set $Q=PW$ so that $dQ=d(PW)=dW$. Therefore 
\begin{equation}
\begin{split}
    &\sum_{i,j}\sigma_i\overline{\sigma_j}\int_{\cU(d)}\bra{1}W^* P^*\ketbra{i}PW\ket{1}\overline{\bra{1}W^* P^*\ketbra{i} PW\ket{1}} dW\\ 
    &= \sum_{i,j}\sigma_i\overline{\sigma_j}\int_{\cU(d)}\bra{1}Q^*\ketbra{i}Q\ket{1}\overline{\bra{1}Q^*\ketbra{i} Q\ket{1}} dQ\\
    &= \sum_{i,j}\sigma_i\overline{\sigma_j}\int_{\cU(d)}Q_{i,1}Q_{j,1}\overline{Q_{i,1}}\text{ }\overline{Q_{j,1}} dQ. \\
\end{split}
\end{equation}
We can now use the integration formula given in \eqref{eqn : wein} to evaluate the integral in the last expression of the previous equation and we arrive at
\begin{equation}
\begin{split}
    &\sum_{i,j}\sigma_i\overline{\sigma_j}\int_{\cU(d)}Q_{i,1}Q_{j,1}\overline{Q_{i,1}}\text{ }\overline{Q_{j,1}} dQ \\ 
    &= \sum_{i,j}\sigma_i \overline{\sigma_j} \Big( \text{Wg}((1))(1 + \delta_{i,j}) + \text{Wg}((1,2))(1 + \delta_{i,j})\Big)\\
    &= \frac{1}{d(d+1)}\sum_{i,j}\sigma_i \overline{\sigma_j} (1 + \delta_{i,j}).
\end{split}
\end{equation}
where $\text{Wg}\big((1)\big)=\frac{1}{d^2-1}$ and $\text{Wg}\big((1,2)\big) = \frac{-1}{d(d^2-1)}$. Since $H$ is a unitary, each $\sigma_i$ has absolute value 1, and so 
\begin{equation}
    \sum_{i,j}\sigma_i\overline{\sigma_j}\delta_{i,j} = \sum_{i}|\sigma_i|^2 = d.
\end{equation}
To compute the other sum, we use the fact that the sum of the eignvalues of a matrix is its trace, so that
\begin{equation}
\begin{split}
    \sum_{i,j}\sigma_i\overline{\sigma_j} =\Big(\sum_i \sigma_i\Big)\Big(\sum_j \overline{\sigma_j}\Big) =  |\tTr(H)|^2.
\end{split}
\end{equation}
Recalling that $H=U^* V$, we can see that $\tTr(H)=\langle U,V\rangle_{HS}$ where the inner product is taken to be the Hilbert-Schmidt inner product. 
Thus it follows that 
\begin{equation}
    F_{avg}(U,V) = \frac{|\langle U, V\rangle|^2 + d}{d(d+1)}.
\end{equation}
 We can see that in order to compute these integrals, we must be familiar with the Weingarten function on the unitary group. In general, the Weingarten function can be difficult to compute, and we therefore include a table of values for the Weingarten function for the first four permutation groups as we'll need these values in the computation of the variance of concern. In table \eqref{tab:wg}, we list the cycle structure of a given permutation $\sigma$ in the left hand column and the the value of $\text{Wg}(\sigma)$ in the right hand column. These values were computed using the tool \cite{tool} written by M. Fukuda et alii.

\begin{table}[h!]
\captionsetup{width=3.8 in}
\caption{Table of Values for the Weingarten function}
\centering

\begin{tabular}{|c|c|}
\hline \hline
Cycle Structure of $\sigma$ & $\text{Wg}(\sigma)$ \\
\hline
   1 & $\frac{1}{d}$ \\
    \hline
    1,1 & $\frac{1}{d^2-1}$ \\
    \hline
    2 &  $\frac{-1}{d(d^2-1)}$\\
    \hline
    1,1,1 & $\frac{d^2-2}{d(d^4-5d^2+4)}$ \\
    \hline
    2,1 &  $\frac{-1}{d^4-5d^2+4}$\\
    \hline
    3 & $\frac{2}{d(d^4-5d^2+4)}$\\
    \hline     
    1,1,1,1 & $ \frac{d^4-8d^2+6}{d^2(d^6-14d^4+49d^2-36)}$ \\
    \hline
    2,1,1 &  $\frac{-1}{d(d^4-10d^2+9)}$\\
    \hline
    2,2& $\frac{d^2+6}{d^2(d^6-14d^4+49d^2-36)}$\\
    \hline 
    3,1 &  $\frac{2d^2-3}{d^2(d^6-14d^4+49d^2-36)}$\\
    \hline
    4 & $\frac{-5}{d(d^6-14d^4+49d^2-36)}$\\
    \hline
\end{tabular}
\label{tab:wg}
\end{table}

Notice that the value of the Weingarten function of a given permutation is dependent only on the cycle structure of a permutation. For example, let $\sigma = (1 2) (34)\in\text{S}_4$. This permutation has cycle structure 2,2 and so $\text{Wg}(\sigma)=\frac{d^2+6}{d^2(d^6-14d^4+49d^2-36)}$.


\section{Proof of Theorem~\ref{Thm:main_theorem}}
Before computing the mean and variance of the derivatives of the cost function $T_f(\rho;\theta_1,\dots,\theta_L)$, we must first discuss the probability distribution of unitaries produced by the circuits defined in Equation\eqref{eqn : ansatz}. 

In order to make the notation more readable, suppose that $\theta = \theta_j$ for some $j\in\{1,\dots,L\}$ and that $V=V_j$. Then the derivative of $U$ with respect to $\theta$ can be written as 
\begin{equation}
    \dt U(\theta) = i\Big(\prod_{l_1=1}^{j}\exp(i\theta_{l_1}V_{l_1})E_{l_1}\Big)V\Big(\prod_{l_2=j+1}^L\exp(i\theta_{l_2}V_{l_2})E_{l_2}\Big).
\end{equation}
The derivative splits the circuit into two \lq\lq pieces", the left side and the right side which we define accordingly 
\begin{equation} \label{E:L_and_R}
    L = \prod_{l_1=1}^{j}\exp(i\theta_{l_1}V_{l_1})E_{l_1} \text{ and } R=\prod_{l_2=j+1}^L\exp(i\theta_{l_2}V_{l_2})E_{l_2}
\end{equation}
which lets us write $\partial_\theta U(\theta)=iLVR$. Moreover, using this expression for the derivative of $U$ lets us express $\partial_\theta U^*\ketbra{i}U$ as 
\begin{equation}\label{eqn : commutator}
    \dt U^*\ketbra{i}U = \sqrt{-1}R^*\big[V,L^*\ketbra{i}L\big]R.
\end{equation}

We will often write $K_i= \sqrt{-1}\big[V,L^*\ketbra{i}L\big]$ for readability so that 
\begin{equation}
     \dt U^*\ketbra{i}U = R^*K_iR.
\end{equation}

This splitting forces us to consider the distributions of the $L$ and $R$ separately. Therefore, just as in the paper \cite{Goog2} by McClean et alii, we write the probability distribution generated by the entire circuit as

\begin{equation}\label{eqn : distribution}
    \rho(U)dU = \delta(U-LR)\rho_1(L)\rho_2(R)dLdR
\end{equation}
where $dU,dL,$ and $dR$ are the Haar measure on $\cU(d)$, $\rho_1$ and $\rho_2$ are densities on $\cU(d)$, and $\delta$ is the dirac measure centered at the $0-$matrix. This distribution forces any unitary $U$ to be split as the product of two other unitaries $L$ and $R$ with their own distributions $\rho_1$ and $\rho_2$ respectively. To quantify these distribution we'll need to use the notion of \emph{unitary k-designs}. A distribution $P$ is defined to be a k-design if $P$ matches the Haar measure on the unitary group up to and including the k-th moment \cite{2des,kdes,haar,bc}. In other words, 
\begin{equation}
    \int_{\cU(d)}U^{\otimes l}\otimes (U^*)^{\otimes l}P(U)d\mu(U) = \int_{\cU(d)}U^{\otimes l}\otimes (U^*)^{\otimes l} d\mu(U)
\end{equation}
for all $0\leq l \leq k$. In practice, exact $k$-designs are costly to produce. And so we'll also need the notion of an $\varepsilon-$approximate $k$-design \cite{kdes}. We say that a a measure $\nu$ is an $\varepsilon$-approximate $k$-design if and only if 
\begin{equation}
    (1-\varepsilon)\int_{\cU(d)}U^{\otimes k} X (U^*)^{\otimes k}dU \leq \int_{\cU(d)}U^{\otimes k}X (U^*)^{\otimes k}d\nu(U) \leq (1+\varepsilon)\int_{\cU(d)}U^{\otimes k}X (U^*)^{\otimes k}dU
\end{equation}
for any $X\in \tL\big(\C^{d\otimes k}\big)$, where $\varepsilon\in(0,1)$ and $dU$ is the Haar measure on $\cU(d)$. Such measures $\nu$ are attainable in practice using only polynomially many gates in the number of qubits \cite{2des, kdes}, thus the result is relevant for NISQ devices.

With the distribution $\rho$ from Equation \eqref{eqn : distribution} in mind, we will split the computation of the mean and variance of $\partial_\theta T_f(\rho;\theta)$ into two cases; One in which we assume that $\rho_1$ is a 4-design, and the other in which $\rho_2$ is a 4-design. Note that this assumption is a practical one since it has been proven \cite{2des, kdes} that polynomial depth quantum circuits such as those given in Equation \eqref{eqn : ansatz} are $\epsilon-$approximate k-designs, if they are of sufficient depth. Therefore, since the gates used the ansatz of 
Equation~\eqref{eqn : ansatz} form a universal gate set, if the depth of that
ansatz is sufficiently large, then at least one of $L$ and $R$ that appear in Equation~\eqref{E:L_and_R} has sufficient depth in order to yield that $\rho_1$ or $\rho_2$ is a 4-design.

Next let's expand the various components of $\dt T_f(\rho;\theta)$ using the definitions of $q_i$ and $\Phi_i$, given in Equations \eqref{eqn : qi} and \eqref{eqn : bigphi} respectively, for each $i \in \{1,\dots,d\}$.
\begin{equation}\label{eqn : qi2}
\begin{split}
    q_i &= \bra{\psi}\Big(I_{\cA\cB}\otimes U^*\ketbra{i}U\Big)\ket{\psi}\\
    &= \Big(\sum_{j_1'}\sqrt{p_{j_1'}}\bra{\psi_{j_1'}}\otimes\bra{j_1'}\Big)\Big(I_{\cA\cB}\otimes  U^*\ketbra{i}U\Big)\Big(\sum_{j_1}\sqrt{p_{j_1}}\ket{\psi_{j_1}}\otimes\ket{j_1}\Big)\\
    &= \sum_{j_1,j_1'}\sqrt{p_{j_1}p_{j_1'}}\bra{\psi_{j_1'}}\ket{\psi_{j_1}}\bra{j_1'}U^*\ketbra{i}U\ket{j_1}\\
    &= \sum_{j_1,j_1'}\sqrt{p_{j_1}p_{j_1'}}\bra{\psi_{j_1'}}\ket{\psi_{j_1}}\bra{i}U\ket{j_1}\overline{\bra{i}U\ket{j_1'}}\\
    &= \sum_{j_1,j_1'}\sqrt{p_{j_1}p_{j_1'}}\bra{\psi_{j_1'}}\ket{\psi_{j_1}}\bra{i}LR\ket{j_1}\overline{\bra{i}LR\ket{j_1'}}.\\
\end{split}
\end{equation}

It can be shown in a similar fashion that 
\begin{equation}
    \tTr_\cA\Phi_i = \sum_{j_1,j_1'}\sqrt{p_{j_1}p_{j_1'}}\text{ }\tTr_\cA\ketbra{\psi_{j_1}}{\psi_{j_1'}}\bra{i}LR\ket{j_1}\overline{\bra{i}LR\ket{j_1'}}.\\
\end{equation}

Using the commutator expression given in Equation \eqref{eqn : commutator} for the derivative of $q_i$ as follows: 
\begin{equation}
\begin{split}
    \frac{\partial q_i}{\partial\theta} &= \bra\psi \Big( I_{\cA\cB}\otimes \dt U^*\ketbra{i}U\Big)\ket\psi\\
    &= \bra\psi \Big( I_{\cA\cB}\otimes R^*K_iR\Big)\ket\psi\\
    &= \sum_{j_1,j_1'}\sqrt{p_{j_1}p_{j_1'}}\bra{\psi_{j_1'}}\ket{\psi_{j_1}}\bra{j_1'} R^*K_i R \ket{j_1}.\\
\end{split}
\end{equation}

In the same way, a similar expression is found for $\dt\tTr_\cA\Phi_i$:
\begin{equation}\label{eqn : dphii}
   \frac{\partial \tTr_\cA\Phi_i}{\partial\theta} = \sum_{j_1,j_1'}\sqrt{p_{j_1}p_{j_1'}}\text{ }\tTr_\cA\ketbra{\psi_{j_1}}{\psi_{j_1'}}\bra{j_1'} R^*K_i R \ket{j_1}.
\end{equation}
Putting all of this together we see that 
\begin{equation}
\begin{split}\label{eqn : dqsquared}
    \dt q_i^2 &= 2q_i\frac{\partial q_i}{\partial\theta}\\
    &= \sum_{j_1,j_2,j_1',j_2'}\sqrt{p_{j_1}p_{j_1'}p_{j_2}p_{j_2'}}\bra{\psi_{j_1'}}\ket{\psi_{j_1}}\bra{\psi_{j_2'}}\ket{\psi_{j_2}}\bra{i}LR\ket{j_1}\overline{\bra{i}LR\ket{j_1'}}\bra{j_2'} R^*K_i R \ket{j_2}.\\
\end{split}
\end{equation}

Also, $\dt\tTr\Big(\tTr_\cA\Phi_i^2\Big)$ has a similar expression:

\begin{equation}
    \dt\tTr\Big(\tTr_\cA\Phi_i^2\Big) = 2\tTr\Big(\tTr_\cA\Phi_i \frac{\partial \tTr_\cA\Phi_i}{\partial\theta}\Big)\\
\end{equation}

which can be seen to be equal to 
\begin{footnotesize}
\begin{equation}\label{eqn : dphisquared}
    2\sum_{j_1,j_2,j_1',j_2'}\sqrt{p_{j_1}p_{j_1'}p_{j_2}p_{j_2'}}\text{ }\tTr\Big(\tTr_\cA\ketbra{\psi_{j_1}}{\psi_{j_1'}}\tTr_\cA\ketbra{\psi_{j_2}}{\psi_{j_2'}}\Big)\bra{i}LR\ket{j_1}\overline{\bra{i}LR\ket{j_1'}}\bra{j_2'} R^*K_i R \ket{j_2}.
\end{equation}
\end{footnotesize}

Putting all of this together, we can expand $\dt T_f(\rho;\theta)$ as

\begin{equation}\label{eqn : deriv}
    \dt T_f(\rho;\theta) = 2\sum_i \Bigg( q_i \frac{\partial q_i}{\partial\theta}-\tTr\Big(\tTr_\cA\Phi_i \frac{\partial \tTr_\cA \Phi_i}{\partial \theta}\Big)\Bigg),
\end{equation}
and then integrate this expression term by term using the expressions above. In what follows, we'll fix an $i$ and integrate the terms $q_i \frac{\partial q_i}{\partial\theta}$ and $\tTr\Big(\tTr_\cA\Phi_i \frac{\partial \tTr_\cA \Phi_i}{\partial \theta}\Big)$ separately for clarity. After the computation of the integrals, we'll then sum the results to arrive at the appropriate expected value.

Now notice that the expressions given in equations \eqref{eqn : dqsquared} and \eqref{eqn : dphisquared}  are of the form 
\begin{equation} \label{eqn : sumc}
    2\sum_{j_1,j_2,j_1',j_2'}\sqrt{p_{j_1}p_{j_1'}p_{j_2}p_{j_2'}}c_{j_1,j_1',j_2,j_2'}\bra{i}LR\ket{j_1}\overline{\bra{i}LR\ket{j_1'}}\bra{j_2'} R^*K_i R \ket{j_2}
\end{equation}
where $|c_{j_1,j_1',j_2,j_2'}|\leq1$ for any choice of indices. Because we only need an upper estimate for our variance, we'll soon see that we don't need the exact values of the $c_{j_1,j_1',j_2,j_2'}$ and that we'll only use the fact that they are bounded in absolute value by 1. Next we need to integrate the above expression with respect to the distribution $\rho_1(L)\rho_2(R)dL dR$. Since only the expression $\bra{i}LR\ket{j_1}\overline{\bra{i}LR\ket{j_1'}}\bra{j_2'} R^*K_i R \ket{j_2}$ is dependent on $L$ and $R$, we can integrate this expression and substitute its value back into Equation \eqref{eqn : sumc}. 

In the integration formula given in Lemma \ref{eqn : wein}, the expressions that were integrated were given in terms of the coordinates of the unitaries, but the expression we currently have is not in this form. We must therefore force the expression into this form and expand the commutator $R^*K_iR$. It's easy to see that $\bra{i}LR\ket{j_1}\overline{\bra{i}LR\ket{j_1'}}\bra{j_2'} R^*K_i R \ket{j_2}$ is, up to a multiple of $\sqrt{-1}$, equal to 
\begin{equation}
    \bra{i}LR\ket{j_1}\overline{\bra{i}LR\ket{j_1'}}\cdot\Big(\bra{j_2'}R^*VL^*\ketbra{i}LR\ket{j_2}-\bra{j_2'}R^*L^*\ketbra{i}LVR\ket{j_2}\Big).
\end{equation}
For our computations, we'll just use the first term, $\bra{j_2'}R^*VL^*\ketbra{i}LR\ket{j_2}$, of the commutator since the computations with the second term are nearly identical. Suppose for now that $\rho_1(L)dL$ is at least a 2-design, then $\rho_1(L)dL$ is equal to the Haar-distribution $dL$ up to the second moment. Thus 
\begin{equation}
\begin{split}
    &\int\int\bra{i}LR\ket{j_1}\overline{\bra{i}LR\ket{j_1'}}\bra{j_2'}R^*VL^*\ketbra{i}LR\ket{j_2}\rho_1(L)\rho_2(R)dLdR \\
    &= \int\int\bra{i}LR\ket{j_1}\bra{i}LR\ket{j_2}\overline{\bra{i}LR\ket{j_1'}}\text{ }\overline{\bra{i}LVR\ket{j_2'}}\rho_2(R)dLdR.
\end{split}
\end{equation}
Using the translation invariance of the Haar-integral on the Unitary group, we define the change of variables $W=LR$ with $dW=dL$. The integral then becomes 
\begin{equation}
\begin{split}
    &\int\int\bra{i}LR\ket{j_1}\bra{i}LR\ket{j_2}\overline{\bra{i}LR\ket{j_1'}}\text{ }\overline{\bra{i}LVR\ket{j_2'}}\rho_2(R)dLdR\\
    &=\int\int\bra{i}W\ket{j_1}\bra{i}W\ket{j_2}\overline{\bra{i}W\ket{j_1'}}\text{ }\overline{\bra{i}WR^*VR\ket{j_2'}}dW\rho_2(R)dR\\
    &= \int\int W_{i,j_1}W_{i,j_2}\overline{W_{i,j_1'}}\text{ }\overline{\bra{i}WR^*VR\ket{j_2'}}dW\rho_2(R)dR.\\
\end{split}
\end{equation}

The last thing that we need to do before using the integration formula is to express $\overline{\bra{i}WR^*VR\ket{j_2'}}$ in terms of the coordinates of $W$. To do this, first define $Q = R^*VR$ so that 
\begin{equation}\label{eqn : q1}
\begin{split}
    R^*VR\ket{j_2'} & = Q\ket{j_2'}\\
    &= \sum_\alpha Q_{\alpha,j_2'}\ket{\alpha},\\
\end{split}
\end{equation}

and thus 

\begin{equation}\label{eqn : q2}
\begin{split}
    \overline{\bra{i}WR^*VR\ket{j_2'}} &= \overline{\bra{i}WQ\ket{j_2'}}\\
    &= \sum_\alpha \overline{Q_{\alpha,j_2'}}\text{ } \overline{\bra{i}W\ket{\alpha}}.\\
\end{split}
\end{equation}
Substituting this expression back into the integral we see that 
\begin{equation}\label{eqn : doubleint1}
\begin{split}
    &\int\int W_{i,j_1}W_{i,j_2}\overline{W_{i,j_1'}}\text{ }\overline{\bra{i}WR^*VR\ket{j_2'}}dW\rho_2(R)dR\\
    &= \sum_\alpha \int \overline{Q_{\alpha,j_2'}}\int W_{i,j_1}W_{i,j_2}\overline{W_{i,j_1'}}\text{ }\overline{\bra{i}W\ket{\alpha}}dW\rho_2(R)dR\\
    &= \sum_\alpha \int \overline{Q_{\alpha,j_2'}}\int W_{i,j_1}W_{i,j_2}\overline{W_{i,j_1'}}\text{ }\overline{W_{i,\alpha}}dW\rho_2(R)dR.\\
\end{split}
\end{equation}
Notice that all of the row coordinates of the $W$'s in the integral are equal to $i$, so we can suppress the $\delta_{i_k,i_{\sigma(k)}'}$ terms in the integration formula from Lemma \eqref{eqn : wein} since they are always equal to 1. The index $\alpha$ also took the place of $j_2'$ in the column coordinate of one of the $W$s in the last expression of the above equation.Therefore the $\alpha$ must be treated as a $j_{k}'$ when using the integration formula from Lemma~\ref{eqn : wein}. For clarity, we isolate the inner integral and compute it as follows:
\begin{equation}
\begin{split}
    &\int W_{i,j_1}W_{i,j_2}\overline{W_{i,j_1'}}\text{ }\overline{W_{i,\alpha}}dW\\
    &= \text{Wg}\Big((1)_2\Big)\delta_{j_1,j_1'}\delta_{j_2,\alpha} \hspace{.5 in} (\sigma=\tau = (1)_2 )\\
    &+ \text{Wg}\Big((1,2)_2\Big) \delta_{j_1,j_1'}\delta_{j_2,\alpha} \hspace{.5 in} (\sigma=(1,2)_2,\tau = (1)_2 )\\
    &+ \text{Wg}\Big((1,2)_2\Big)\delta_{j_1,\alpha}\delta_{j_2,j_1'} \hspace{.5 in} (\sigma=(1)_2,\tau = (1,2)_2 )\\
    &+ \text{Wg}\Big((1)_2\Big)\delta_{j_1,\alpha}\delta_{j_2,j_1'} \hspace{.5 in} (\sigma=\tau = (1,2)_2 )\\
    &= \Big(\text{Wg}(1,1) + \text{Wg}(2)\Big)\Big(\delta_{j_1,j_1'}\delta_{j_2,\alpha} + \delta_{j_1,\alpha}\delta_{j_2,j_1'}\Big).\\
\end{split}
\end{equation}
Defining $C:=\Big(\text{Wg}(1,1) + \text{Wg}(2)\Big)=\frac{1}{d(d+1)}$ and substituting the last expression of the above equation back into Equation \eqref{eqn : doubleint1}, we see that the value of the integral is equal to 

\begin{equation}
\begin{split}
    &C\sum_\alpha \int \overline{Q_{\alpha,j_2'}}\Big(\delta_{j_1,j_1'}\delta_{j_2,\alpha} + \delta_{j_1,\alpha}\delta_{j_2,j_1'} \Big)\rho_2(R)dR\\
    &=C \int \overline{Q_{j_2j_2'}}\delta_{j_1,j_1'}\rho_2(R)dR +C\int  \overline{Q_{j_1,j_2'}}\delta_{j_2,j_1'}\rho_2(R)dR.
\end{split}
\end{equation}
Ignoring the integral and outside constants for the moment, let's plug the expressions $\overline{Q_{j_2j_2'}}\delta_{j_1,j_1'}$ and $\overline{Q_{j_1,j_2'}}\delta_{j_2,j_1'}$ into Equation \eqref{eqn : sumc} and try to bound the sum in absolute value. Substituting $\overline{Q_{j_2,j_2'}}\delta_{j_1,j_1'}$, we see that
\begin{equation}
    \begin{split}
        \sum_{j_1,j_2,j_1',j_2'}\sqrt{p_{j_1}p_{j_1'}p_{j_2}p_{j_2'}}c_{j_1,j_1',j_2,j_2'}\overline{Q_{j_2,j_2'}}\delta_{j_1,j_1'} & = \sum_{j_1,j_2,j_2'}p_{j_1}\sqrt{p_{j_2}p_{j_2'}}c_{j_1,j_1',j_2,j_2'}\overline{Q_{j_2,j_2'}}.\\
    \end{split}
\end{equation}
Now taking absolute values, 
\begin{equation}
\begin{split}
\Big|\sum_{j_1,j_2,j_2'}p_{j_1}\sqrt{p_{j_2}p_{j_2'}}c_{j_1,j_1',j_2,j_2'}\overline{Q_{j_2,j_2'}}\Big|&\leq \sum_{j_1,j_2,j_2'}p_{j_1}\sqrt{p_{j_2}p_{j_2'}}\big|c_{j_1,j_1',j_2,j_2'}\big|\cdot \big|Q_{j_2,j_2'}\big| \\
&\leq \sum_{j_1,j_2,j_2'}p_{j_1}\sqrt{p_{j_2}p_{j_2'}} \big|Q_{j_2,j_2'}\big|\\
&= \Bigg(\sum_{j_1}p_{j_1}\Bigg)\cdot \Bigg( \sum_{j_2,j_2'}\sqrt{p_{j_2}p_{j_2'}} \big|Q_{j_2,j_2'}\big|\Bigg)\\
& = \sum_{j_2,j_2'}\sqrt{p_{j_2}p_{j_2'}} \big|Q_{j_2,j_2'}\big|.\\
\end{split}
\end{equation}
Lastly using the Cauchy-Schwartz inequality and the fact that the Hilbert-Schmidt norm of a $d\cross d$ unitary matrix is equal to $\sqrt{d}$, we can see that 
\begin{equation}
\begin{split}
    \sum_{j_2,j_2'}\sqrt{p_{j_2}p_{j_2'}} \big|Q_{j_2,j_2'}\big| &\leq \Bigg( \sum_{j_2,j_2'}p_{j_2}p_{j_2'}\Bigg)^{\frac{1}{2}}\Bigg(\sum_{j_2,j_2'}\big|Q_{j_2,j_2'}\big|^2\Bigg)^{\frac{1}{2}} \\
    &=\|Q\|_{HS}\\
    &= \sqrt{d}.\\
\end{split}
\end{equation}
The computation when substituting the term $\overline{Q_{j_1,j_2'}}\delta_{j_2,j_1'}$ is almost exactly the same so we omit it for brevity. We have thus showed that when $\rho_1(L)dL$ is at least a 2 design and $\rho_2(R)dR$ is an arbitrary distribution, the integral of Equation \eqref{eqn : sumc} is bounded in absolute value as   

\begin{footnotesize}
\begin{equation}\label{eqn : int2}
\begin{split}
    &\Bigg|\int \int 2\sum_{j_1,j_2,j_1',j_2'}\sqrt{p_{j_1}p_{j_1'}p_{j_2}p_{j_2'}}c_{j_1,j_1',j_2,j_2'}\bra{i}LR\ket{j_1}\overline{\bra{i}LR\ket{j_1'}}\bra{j_2'} R^*K_i R \ket{j_2} \rho_1(L)dL\rho_2(R)dR\Bigg|\\
    &\leq \int 2C\sqrt{d}\rho_2(R)dR \\
    &= \frac{2\sqrt{d}}{d(d+1)}.
\end{split}
\end{equation}
\end{footnotesize}
And since both $\avg\Bigg[2q_i\frac{\partial q_i}{\partial\theta}\Bigg]$ and $\avg\Bigg[\dt\tTr\Big(\tTr_\cA\Phi_i^2\Big)\Bigg]$ are both bounded by the above integral, it follows that

\begin{equation}\label{eqn : detailmean1}
    \avg\Bigg[2q_i\frac{\partial q_i}{\partial\theta}\Bigg] \sim \mathscr{O}\Bigg(\frac{1}{d^{\frac{3}{2}}}\Bigg) \text{ and } \avg\Bigg[\dt\tTr\Big(\tTr_\cA\Phi_i^2\Big)\Bigg] \sim \mathscr{O}\Bigg(\frac{1}{d^{\frac{3}{2}}}\Bigg).\\
\end{equation}
Therefore 
\begin{equation}\label{eqn : detailmean2}
    \avg\Bigg[\dt T_f(\rho;\theta)\Bigg] = \sum_i\Bigg( \avg\Bigg[2q_i\frac{\partial q_i}{\partial\theta}\Bigg]-\avg\Bigg[\dt\tTr\Big(\tTr_\cA\Phi_i^2\Big)\Bigg]\Bigg) \sim \mathscr{O}\Bigg(\frac{1}{\sqrt{d}}\Bigg).\\
\end{equation}
This shows that the expected gradients converge exponentially to 0 in the number of qubits when $L$ is at least a 2-design and the distribution of $R$ is arbitrary. 

Suppose now that the distribution of $R$ is at least a 2-design and the the distribution of $L$ is arbitrary. We will make almost the same change of variables as before with $W=LR$ and $dW=dR$. However we now define $Q = LVL^*$ so that 
\begin{equation}
\begin{split}
    \overline{\bra{i}LVR\ket{j_2'}} &= \overline{\bra{i}QW\ket{j_2'}}\\
    &= \sum_\alpha \overline{Q_{i,\alpha}}\text{ }\overline{W_{\alpha, j_2'}}.\\
\end{split}
\end{equation}
Now using the same steps as in the first case in Equations \eqref{eqn : q1} and \eqref{eqn : q2}, we arrive at the integral 
\begin{equation}
\begin{split}
    &\sum_\alpha \int \overline{Q_{i,\alpha}}\int W_{i,j_1}W_{i,j_2}\overline{W_{i,j_1'}}\text{ }\overline{W_{\alpha, j_2'}}dW\rho_1(L)dL\\
    &=\sum_\alpha \int \overline{Q_{i,\alpha}} C\delta_{i,\alpha}\Big(\delta_{j_1,j_1'}\delta_{j_2,j_2'} +\delta_{j_1,j_2'}\delta_{j_2,j_1'}\Big) \rho_1(L)dL \\
    &=  C\int \overline{Q_{i,i}}\Big(\delta_{j_1,j_1'}\delta_{j_2,j_2'} +\delta_{j_1,j_2'}\delta_{j_2,j_1'}\Big) \rho_1(L)dL.\\
\end{split}
\end{equation}
And again we now substitute one of the expressions containing deltas into Equation \eqref{eqn : sumc} and bound the expression. 
\begin{equation}
\begin{split}
     \sum_{j_1,j_2,j_1',j_2'}\sqrt{p_{j_1}p_{j_1'}p_{j_2}p_{j_2'}}c_{j_1,j_1',j_2,j_2'}\overline{Q_{i,i}}\delta_{j_1,j_1'}\delta_{j_2,j_2'} &= \sum_{j_1,j_2}p_{j_1}p_{j_2}c_{j_1,j_1,j_2,j_2}\overline{Q_{i,i}}.\\
\end{split}
\end{equation}
Since both the $c_{j_1,j_1,j_2,j_2}$ and $\overline{Q_{i,i}}$ are bounded in absolute value by 1, we see that 
\begin{footnotesize}
\begin{equation}
\begin{split}
     &\Bigg|\int \int 2\sum_{j_1,j_2,j_1',j_2'}\sqrt{p_{j_1}p_{j_1'}p_{j_2}p_{j_2'}}c_{j_1,j_1',j_2,j_2'}\bra{i}LR\ket{j_1}\overline{\bra{i}LR\ket{j_1'}}\bra{j_2'} R^*K_i R \ket{j_2} \rho_1(L)dL\rho_2(R)dR\Bigg|\\
    &= \Bigg|\sum_i 2C\sum_{j_1,j_2}p_{j_1}p_{j_2}c_{j_1,j_1,j_2,j_2}\overline{Q_{i,i}} \Bigg| \\
    &\leq 2C\sum_i\sum_{j_1,j_2}p_{j_1}p_{j_2}\big|c_{j_1,j_1,j_2,j_2}\big| \big|Q_{i,i}\big|\\
    &\leq 2C\sum_i\sum_{j_1,j_2}p_{j_1}p_{j_2}\\
    &= 2C\sum_i 1\\
    &= \frac{2d}{d(d+1)} \\
    &= \frac{2}{d+1}.\\
\end{split}
\end{equation}
\end{footnotesize}
Now we the same steps as in Equations \eqref{eqn : detailmean1} and \eqref{eqn : detailmean2} to compute the expected value again. Thus in the case when $\rho_2(R)dR$ is at least a 2-design and $\rho_1(L)dL$ is given by an arbitrary distribution, we see that 
\begin{equation}
    \avg\Bigg[\dt T_f(\rho;\theta)\Bigg]  \sim \mathscr{O}\Bigg(\frac{1}{d}\Bigg).
\end{equation}

We now need to compute the second moment of the $\dt T_f(\rho;\theta)$. We will accomplish this by squaring the expression in Equation \eqref{eqn : deriv} and integrating term by term just as we did with mean. In order to compute these integrals, we need stronger,but still modest, assumptions on the distributions of $\rho_1$ amd $\rho_2$. Namely we must split into cases when $\rho_1$ is a 4-design and $\rho_2$ is not, and vice versa. Note that such a circuit is indeed possible in practice and only needs to be of polynomial depth \cite{kdes}. 

First let's square the sum in Equation \eqref{eqn : deriv} to understand what we're working with. 
\begin{footnotesize}

\begin{equation}\label{eqn : 2moment}
\begin{split}
    \Big(\dt T_f(\rho;\theta)\Big)^2&=4\Bigg(\sum_i \Bigg( q_i \frac{\partial q_i}{\partial\theta}-\tTr\Big(\tTr_\cA\Phi_i \frac{\partial \tTr_\cA \Phi_i}{\partial \theta}\Big)\Bigg)\Bigg)^2\\
    &= 4\sum_{i,j}\Bigg( q_i \frac{\partial q_i}{\partial\theta}-\tTr\Big(\tTr_\cA\Phi_i \frac{\partial \tTr_\cA \Phi_i}{\partial \theta}\Big)\Bigg)\Bigg( q_j \frac{\partial q_j}{\partial\theta}-\tTr\Big(\tTr_\cA\Phi_j \frac{\partial \tTr_\cA \Phi_j}{\partial \theta}\Big)\Bigg).
\end{split}
\end{equation}
\end{footnotesize}
When multiplying term by term in the last line of the equation above, making substitutions using Equations \eqref{eqn : qi2} through \eqref{eqn : dphii} and using the same change of variables $W=LR$, we encounter expressions of the form
\begin{footnotesize}

{\begin{equation} \label{eqn : sumc4}
\sum_{\substack{j_1,j_2,j_1',j_2' \\ j_3,j_4,j_3',j_4'}}\sqrt{p_{j_1}p_{j_1'}p_{j_2}p_{j_2'}p_{j_3}p_{j_3'}p_{j_4}p_{j_4'}}c_{j_1,j_1',j_2,j_2'}c_{j_3,j_3',j_4,j_4'}W_{i,j_1}\overline{W_{i,j_1'}}W_{i,j_3}\overline{W_{i,j_3'}} \bra{j_2'} R^*K_i R \ket{j_2} \bra{j_4'} R^*K_j R \ket{j_4}
\end{equation}}
\end{footnotesize}
for each each fixed $i$ and $j$. Just as we did in the computation of the mean, we will fix $i$ and $j$, compute the appropriate integrals for each term, and sum the results at the end. 

Again, we only need to worry about the terms of the above equation that involve only entries of $W$, $L$,  or $R$ when integrating, and can substitute the values of these integrals back into the Equation. Because there are now two commutators, $K_i$ and $K_j$, in the expression, we'll need to break our proof into cases based on which terms of the commutator are multiplied together. To understand the cases better, let's expand  fully expand $\bra{j_2'} R^*K_i R \ket{j_2} \bra{j_4'} R^*K_j R \ket{j_4}$. Using the definitions of $K_i$ and $k_j$ respectively, we can see that 

\begin{equation}\label{eqn : comex}
\begin{split}
    &\bra{j_2'} R^*K_i R \ket{j_2} \bra{j_4'} R^*K_j R \ket{j_4}\\
    &= -\bra{j_2'}R^* \Big[V,L^*\ketbra{i}L\Big]R \ket{j_2} \bra{j_4'}R^*\Big[V,L^*\ketbra{i}L\Big]R\ket{j_4}\\
    &= -\bra{j_2'}R^*VL^*\ketbra{i}LR\ket{j_2}\bra{j_4'}R^*VL^*\ketbra{i}LR\ket{j_4}\\
    &+ \bra{j_2'}R^*VL^*\ketbra{i}LR\ket{j_2}\bra{j_4'}R^*L^*\ketbra{i}LVR\ket{j_4}\\
    &+\bra{j_2'}R^*L^*\ketbra{i}LVR\ket{j_2}\bra{j_4'}R^*VL^*\ketbra{i}LR\ket{j_4}\\
    &- \bra{j_2'}R^*L^*\ketbra{i}LVR\ket{j_2}\bra{j_4'}R^*L^*\ketbra{i}LVR\ket{j_4}.
\end{split}
\end{equation}

This first case we need to consider, \emph{case A}, is when either the first term,
\begin{equation}
    -\bra{j_2'}R^*VL^*\ketbra{i}LR\ket{j_2}\bra{j_4'}R^*VL^*\ketbra{i}LR\ket{j_4},
\end{equation}
 or the last term
 \begin{equation}
     -\bra{j_2'}R^*L^*\ketbra{i}LVR\ket{j_2}\bra{j_4'}R^*L^*\ketbra{i}LVR\ket{j_4},
 \end{equation}
 are used in the computation of the integral of Equation \eqref{eqn : sumc4}. The computations involved in bounding the integral of those expression are almost idential to each other. The second case, \emph{case B}, is when either of the middle terms in the above expansion of the commutator, are used in the computation of the integral of Equation \eqref{eqn : sumc4}. Let's look at \emph{case A} first. The expression that we want to integrate is given by 
\begin{equation}
    W_{i,j_2}W_{j,j_4}\bra{j_2'}R^*VL^*\ket{i}\bra{j_4'}R^*VL^*\ket{j}.
\end{equation}
Thus we need to integrate expressions of the form: 
\begin{equation}
    W_{i,j_1}\overline{W_{i,j_1'}}W_{i,j_3}\overline{W_{i,j_3'}}W_{i,j_2}W_{j,j_4}\bra{j_2'}R^*VL^*\ket{i}\bra{j_4'}R^*VL^*\ket{j}.
\end{equation}
Since the computations that follow are very similar for when either $\rho_1(L)dL$ or $\rho_2(R)dR$ is a 4-design respectively, we only provide the details for the first case. So suppose that $\rho_1(L)dL$ is at least a 4-design , and again define $Q$ to be $R^*VR$ so that the integral of the above expression is equal to 
\begin{equation}
\begin{split}
    &\int\int W_{i,j_1}\overline{W_{i,j_1'}}W_{i,j_3}\overline{W_{i,j_3'}}W_{i,j_2}W_{j,j_4}\bra{j_2'}R^*VL^*\ket{i}\bra{j_4'}R^*VL^*\ket{j} \rho_1(L)dL\rho_2(R)dR\\
    &= \int\int W_{i,j_1}\overline{W_{i,j_1'}}W_{i,j_3}\overline{W_{i,j_3'}}W_{i,j_2}W_{j,j_4}\bra{j_2'}QW^*\ket{i}\bra{j_4'}QW^*\ket{j} \rho_1(L)dL\rho_2(R)dR\\
    &=\int\int W_{i,j_1}\overline{W_{i,j_1'}}W_{i,j_3}\overline{W_{i,j_3'}}W_{i,j_2}W_{j,j_4}\overline{\bra{i}WQ^*\ket{j_2'}}\overline{\bra{j}WQ^*\ket{j_4'}} \rho_1(L)dL\rho_2(R)dR\\
    &=\int\int W_{i,j_1}\overline{W_{i,j_1'}}W_{i,j_3}\overline{W_{i,j_3'}}W_{i,j_2}W_{j,j_4}\overline{\bra{i}WQ\ket{j_2'}}\overline{\bra{j}WQ\ket{j_4'}} \rho_1(L)dL\rho_2(R)dR\\
    &=\sum_{\alpha,\beta}\overline{Q_{\alpha, j_2'}}\text{ }\overline{Q_{\beta, j_4'}}\int\int W_{i,j_1}W_{i,j_2}W_{j,j_3}W_{j,j_4}\overline{W_{i,j_1'}}\text{ }\overline{W_{j,j_3'}}\text{ }\overline{W_{i,\alpha}}\text{ }\overline{W_{j,\beta}}\rho_1(L)dL \rho_2(R)dR 
\end{split}
\end{equation}
where the two lines are derived using the fact that $Q$ is hermitian, and by expanding $Q$ in terms of its coordinates. Since computing the exact value of this integral involves summing over $S_4\cross S_4$, we'll instead show the value of one of terms in the sum using $\sigma = (1234)$ and $\tau = (13)(24)$. It'll be easy to see that each term has value on the order $\frac{1}{d^4}$ and that the choice of permutation doesn't play a significant role in the computation. 

These choices of permutations lead us to the following expression:
\begin{equation}
    \wg\big(\sigma\tau^{-1}\big)\sum_{\alpha \beta}\overline{Q_{\alpha, j_2'}}\text{ }\overline{Q_{\beta, j_4'}}\delta_{i,j}\delta_{j_1, \alpha}\delta_{j_2, \beta}\delta_{j_3, j_1'}\delta_{j_4, j_3'} = \wg\big(\sigma\tau^{-1}\big)\overline{Q_{j_1, j_2'}}\text{ } \overline{Q_{j_2, j_4'}}\delta_{i,j}\delta_{j_3, j_1'}\delta_{j_4, j_3'}.
\end{equation}

Ignoring the constant $\wg\big(\sigma\tau^{-1}\big)$ and substituting the right hand side of the equation into Equation \eqref{eqn : sumc4} we obtain 
\begin{equation}
\begin{split}
    &\sum_{\substack{j_1,j_2,j_1',j_2' \\ j_3,j_4,j_3',j_4'}}\sqrt{p_{j_1}p_{j_1'}p_{j_2}p_{j_2'}p_{j_3}p_{j_3'}p_{j_4}p_{j_4'}}c_{j_1,j_1',j_2,j_2'}c_{j_3,j_3',j_4,j_4'}\overline{Q_{j_1, j_2'}}\text{ }\overline{Q_{j_2, j_4'}}\delta_{j_3, j_1'}\delta_{j_4, j_3'}\delta_{i,j} \\
    &=\sum_{\substack{j_1,j_2,j_2' \\ j_3,j_4,j_4'}}p_{j_3}p_{j_4}\sqrt{p_{j_1}p_{j_2}p_{j_2'}p_{j_4'}}c_{j_1,j_3,j_2,j_2'}c_{j_3,j_4,j_4,j_4'}\overline{Q_{j_1, j_2'}}\text{ }\overline{Q_{j_2, j_4'}}\delta_{i,j}.
\end{split}
\end{equation}
Then ignoring the $\delta_{i,j}$ term for the moment, taking absolute values, and using the Cuachy-Schwartz inequality again, the above sum is less than or equal to 
\begin{equation}
\begin{split}
    &\sum_{\substack{j_1,j_2,j_2' \\ j_3,j_4,j_4'}}p_{j_3}p_{j_4}\sqrt{p_{j_1}p_{j_2}p_{j_2'}p_{j_4'}}|Q_{j_1, j_2'}||Q_{j_2, j_4'}| \\
    & = \sum_{j_1,j_2,j_2',j_4'}\sqrt{p_{j_1}p_{j_2}p_{j_2'}p_{j_4'}}|Q_{j_1, j_2'}||Q_{j_2, j_4'}| \\
    &=\Big(\sum_{j_1,j_2'}\sqrt{p_{j_1}p_{j_2'}}|Q_{j_1, j_2'}|\Big)\Big( \sum_{j_2,j_4'}\sqrt{p_{j_2}p_{j_4'}}|Q_{j_2, j_4'}|\Big)\\
    &\leq \Big(\sum_{j_1,j_2'}p_{j_1}p_{j_2'}\Big)^\frac{1}{2} \Big(\sum_{j_1,j_2'}|Q_{j_1, j_2'}|^2\Big)^\frac{1}{2}\Big(\sum_{j_2,j_4'} p_{j_2}p_{j_4'}\Big)^\frac{1}{2}\Big(\sum_{j_2,j_4'}|Q_{j_2, j_4'}|^2\Big)^\frac{1}{2}\\
    &= \|Q\|_{HS}^2\\
    &=d.
\end{split}
\end{equation}
Choosing any other pair of permutation $\sigma$ and $\tau$ will yield a result of the same magnitude. Since $\wg\big(\sigma\tau^{-1}\big)$ is of order at most $\frac{1}{d^4}$ for any $\sigma,\tau\in \text{S}_4$, the integral of these terms must be of order at most $\frac{1}{d^3}$. The computation for when $R$ is a 4-design will be similar to the previous case and will result in a value whose magnitude is of the order $\frac{1}{d^4}$.

Now we need to consider \emph{case B}, in which we substitute the middle terms from the commutator in Equation \eqref{eqn : comex} into Equation \eqref{eqn : sumc4} and integrate. Namely, we want to integrate expressions of the form 
\begin{equation}
     W_{i,j_1}\overline{W_{i,j_1'}}W_{i,j_3}\overline{W_{i,j_3'}}\bra{j_2'}R^*VL^*\ketbra{i}LR\ket{j_2}\bra{j_4'}R^*L^*\ketbra{i}LVR\ket{j_4}.\\
\end{equation}

Now suppose that $L$ is at least a 4-design and define $Q=R^*VR$ again. Then the above expression reduces to 
\begin{equation}
\begin{split}
     &W_{i,j_1}W_{i,j_2}W_{i,j_3}\overline{W_{i,j_1'}}\text{ }\overline{W_{i,j_3'}}\text{ }\overline{W_{j,j_4'}}\bra{j_2'}QW^*\ket{i}\bra{j}WQ\ket{j_4}\\
     &=\sum_{\alpha,\beta}\overline{Q_{j_2',\alpha}}Q_{\beta,j_4}W_{i,j_1}W_{i,j_2}W_{i,j_3}W_{j,\beta}\overline{W_{i,j_1'}}\text{ }\overline{W_{i,j_3'}}\text{ }\overline{W_{j,j_4'}}\text{ }\overline{W_{i,\alpha}}.
\end{split} 
\end{equation}
This integral is a little trickier than the rest. Up to now the indices $\alpha$ and $\beta$, that were added when applying $Q$ to a ket or bra, have been paired with a $j$,$j'$, or an $i$ when integrating. However, there is now the possibility that $\alpha$ and $\beta$ can be paired with each other this time. When looking at the string of Kronecker deltas produced from the Haar integral, we must simplify the sum in Equation \eqref{eqn : sumc4} based on two distinct cases. The first, when $\alpha$ and $\beta$ are not paired with each other, and the second when $\alpha$ and $\beta$ are paired with each other. Computationally, the first case is nearly identical to \emph{case A}. We will only focus on the case in which $\alpha$ and $\beta$ are paired with each other in the integration formula in Lemma~\ref{eqn : wein}. Once again, we only focus on example permutations of this form, and leave it to the reader to see that the computations are unchanged by a different permutation as long as $\alpha$ and $\beta$ are still paired with each other. For simplicity, take $\sigma=\tau=(1)_4$. Then integrating the above expression with respect to $W$ yields 
\begin{equation}
\begin{split}
     &\int\sum_{\alpha,\beta}\overline{Q_{j_2,'\alpha}}Q_{\beta,j_4}W_{i,j_1}W_{i,j_2}W_{i,j_3}W_{j,\beta}\overline{W_{i,j_1'}}\text{ }\overline{W_{i,j_3'}}\text{ }\overline{W_{j,j_4'}}\text{ }\overline{W_{i,\alpha}}dW\\
     &=\wg(1,1,1,1)\sum_{\alpha,\beta}\overline{Q_{j_2',\alpha}}Q_{\beta,j_4}\delta_{j_1, j_1'}\delta_{j_2, j_3'}\delta_{j_3, j_4'}\delta_{\alpha, \beta}\\
     &=\wg(1,1,1,1)\sum_{\alpha}\overline{Q_{j_2',\alpha}}Q_{\alpha,j_4}\delta_{j_1 ,j_1'}\delta_{j_2, j_3'}\delta_{j_3, j_4'}.\\
\end{split}
\end{equation}
Just like before, we ignore the leading constant for readability and substitute the above expression into \eqref{eqn : sumc4} and bound it in absolute value. 

\begin{equation}
\begin{split}
    &\sum_\alpha\sum_{\substack{j_1,j_2,j_1',j_2' \\ j_3,j_4,j_3',j_4'}}\sqrt{p_{j_1}p_{j_1'}p_{j_2}p_{j_2'}p_{j_3}p_{j_3'}p_{j_4}p_{j_4'}}c_{j_1,j_1',j_2,j_2'}c_{j_3,j_3',j_4,j_4'}\overline{Q_{j_2',\alpha}}Q_{\alpha,j_4}\delta_{j_1, j_1'}\delta_{j_2, j_3'}\delta_{j_3, j_4'}\\
    &= \sum_\alpha \sum_{j_1,j_2,j_3,j_4,j_2'}p_{j_1}p_{j_2}p_{j_3}\sqrt{p_{j_2'}p_{j_4}}c_{j_1,j_1,j_2,j_2'}c_{j_3,j_2,j_4,j_3}\overline{Q_{j_2',\alpha}}Q_{\alpha,j_4}.
\end{split}
\end{equation}

Taking absolute values using Cauchy Schwartz again, we see that the magnitude of this sum is bounded by 
\begin{equation}
\begin{split}
    &\sum_\alpha\sum_{j_4 j_2'}\sqrt{p_{j_2'}p_{j_4}}|Q_{j_2',\alpha}||Q_{\alpha,j_4}|\\
    &= \sum_\alpha \Big( \sum_{j_4}\sqrt{p_{j_4}}|Q_{\alpha,j_4}|\Big)\Big(\sum_{j_2'}\sqrt{p_{j_2'}}|Q_{j_2',\alpha}|\Big)\\
    &\leq \sum_\alpha \Big(\sum_{j_4}p_{j_4}\Big)^\frac{1}{2}\Big(\sum_{j_4}|Q_{\alpha,j_4}|^2\Big)^\frac{1}{2}\Big(\sum_{j_2'}p_{j_2'}\Big)^\frac{1}{2} \Big(\sum_{j_2'}|Q_{\alpha,j_2'}|^2\Big)^\frac{1}{2}\\
    &\leq \sum_\alpha 1\\
    &= d.\\
\end{split}
\end{equation}
Thus when multiplying by the appropriate evaluation of the Weingarten function, this expression is of order at most $\frac{1}{d^3}$. And since there are $d^2$ terms in the integrand of the second moment given in Equation \eqref{eqn : 2moment}, the second moment must be of order $\frac{1}{d}$. And therefore 
\begin{equation}
    \variance\Big[\dt T_f(\rho;\theta)\Big] \sim \mathscr{O}\Big(\frac{1}{d}\Big).
\end{equation}
Lastly, since $d=2^k$ where $k$ is the number of qubits, we see that the variance decreases exponentially in the number of qubits, implying the existence of barren plateaus in the training landscape of $T_f(\rho;\theta)$ when using the ansatz in Equation \eqref{eqn : ansatz} with sufficient depth.


\end{document}